\documentclass[lettersize,journal]{IEEEtran}
\usepackage{amsmath,amsfonts}
\usepackage{algorithmic}
\usepackage{algorithm}
\usepackage{array}
\usepackage[caption=false,font=normalsize,labelfont=sf,textfont=sf]{subfig}
\usepackage{textcomp}
\usepackage{stfloats}
\usepackage{url}
\usepackage{verbatim}
\usepackage{graphicx}
\usepackage{cite}
\usepackage{natbib}
\usepackage{doi}
\usepackage{enumitem}
\usepackage{multirow}
\usepackage{tabularx, booktabs}
\usepackage{amsthm}
\usepackage{hyperref}
\usepackage{fancyhdr}
\usepackage{amssymb}
\usepackage[table]{xcolor}
\usepackage{wrapfig}
\usepackage{fontawesome}
\usepackage{overpic}
\usepackage{subcaption}
\usepackage{tikz}
\usepackage{xurl}          
\hypersetup{breaklinks=true}
\makeatletter
\g@addto@macro{\UrlBreaks}{\do\@\do\_\do\.} 
\makeatother
\hyphenation{op-tical net-works semi-conduc-tor IEEE-Xplore}

\newtheorem{theorem}{Theorem}[section]
\newtheorem{proposition}{Proposition}

\newenvironment{remark}{\par\noindent\textbf{Remark.}\ }{\par}

\begin{document}


\title{Spectral Bottleneck in Sinusoidal Representation Networks: Noise is All You Need}

\author{%
  Hemanth Chandravamsi,
  Dhanush V. Shenoy,
  Itay Zinn,
  Ziv Chen, 
  Shimon Pisnoy, and
  Steven H. Frankel
  \thanks{{\raggedright
All the authors are with the Technion - Israel Institute of Technology, Haifa 3200003, Israel
(email: \url{hemanth@campus.technion.ac.il};
\url{dhanushv@campus.technion.ac.il};
\url{itay.z@campus.technion.ac.il};
\url{shimonpi@campus.technion.ac.il};
\url{ziv.chen@campus.technion.ac.il};
\url{frankel@me.technion.ac.il}).
\textit{Corresponding author: Hemanth Chandravamsi}.}
The code and data associated with this paper can be accessed through the project website at \url{https://cfdlabtechnion.github.io/siren_square/}.
The supplementary material corresponding to this manuscript is provided in the Appendix section.} 
}



\maketitle

\begin{abstract}
This work identifies and attempts to address a fundamental limitation of implicit neural representations with sinusoidal activation. The fitting error of SIRENs is highly sensitive to the target frequency content and to the choice of initialization. In extreme cases, this sensitivity leads to a spectral bottleneck that can result in a zero-valued output. This phenomenon is characterized by analyzing the evolution of activation spectra and the empirical neural tangent kernel (NTK) during the training process. An unfavorable distribution of energy across frequency modes was noted to give rise to this failure mode. Furthermore, the effect of Gaussian perturbations applied to the baseline uniformly initialized weights is examined, showing how these perturbations influence activation spectra and the NTK eigenbasis of SIREN. Overall, initialization emerges as a central factor governing the evolution of SIRENs, indicating the need for adaptive, target-aware strategies as the target length increases and fine-scale detail becomes essential. The proposed weight initialization scheme (WINNER) represents a simple ad hoc step in this direction and demonstrates that fitting accuracy can be significantly improved by modifying the spectral profile of network activations through a target-aware initialization. The approach achieves state-of-the-art performance on audio fitting tasks and yields notable improvements in image fitting tasks.
\end{abstract}

\begin{IEEEkeywords}
Implicit neural representation, spectral bias, high-frequency-signal representation, audio signals.
\end{IEEEkeywords}

\section{Introduction}
\begin{figure*}[t!]
    \centering
    \includegraphics[width=\linewidth]{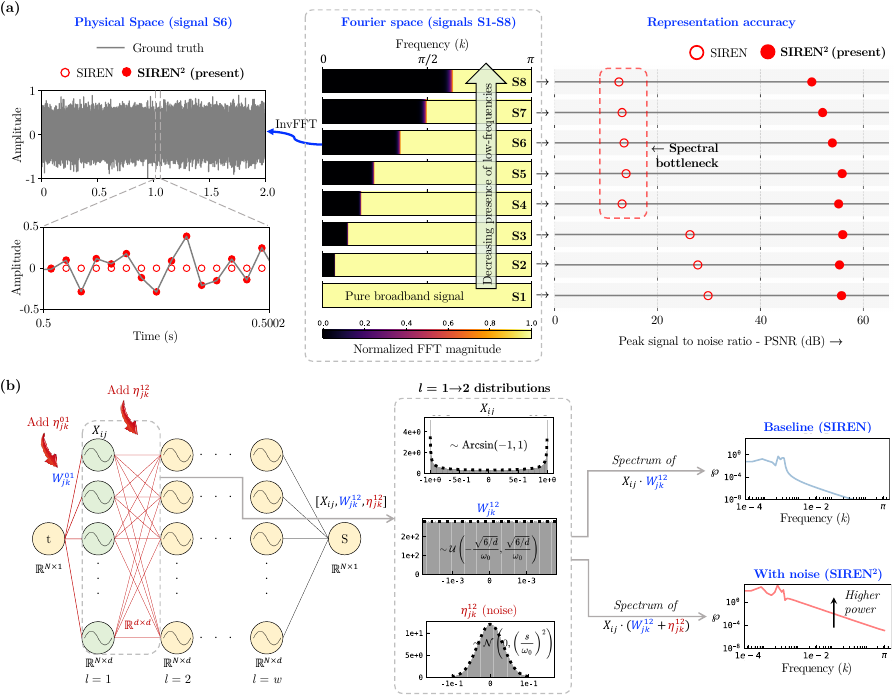}
    \caption{\textbf{Spectral bottleneck issue shown using SIREN and overview of the weight perturbation scheme.} (a) We attempt to fit eight discretely sampled broadband 1D signals (S1–S8) with decreasing low-frequency content. As shown in the right panel, the PSNR of SIREN \cite{sitzmann2020implicit} progressively decreases from S1 to S4, eventually encountering a spectral bottleneck after S4. SIREN fails to capture nearly all the frequencies of signals S4–S8, even though its frequency support should in principle allow it to represent a portion of the spectrum. In contrast, SIREN$^2$ initialized with WINNER maintains higher PSNR across all signals. (b) Schematic of a feedforward neural network with periodic activations, illustrating the statistical distributions of layer-1 outputs $X_{ij}$, weight matrix $W_{jk}$, and noise matrix $\eta_{jk}$. The effect of Gaussian noise ($\eta_{jk}$) on the spectrum of layer 2 pre-activations is shown: WINNER enhances the receptivity of high-frequencies.}
    \label{fig:abstract}
    \vspace{-15pt}
\end{figure*}
Implicit neural representations (INRs) of natural and synthetic signals have broad applications across various domains. They allow neural networks to represent coordinate-based discrete data such as images, videos, audio, 3D shapes, and scientific datasets as continuous functions. This allows seamless integration of multimedia, medical, and scientific data into machine learning pipelines for tasks such as denoising, classification, inpainting, and latent representation~\cite{mildenhall2021nerf,oechsle2019texture,niemeyer2020differentiable,szatkowski2022hypersound}. The key advantages of INRs over discretely sampled data are continuous input parameterization (no grid), fully differentiable networks with accessible gradients, and compressed data representation~\cite{sitzmann2020implicit, tancik2020fourier}. Using the readily available spatio-temporal gradients obtained through automatic differentiation, INRs can also be employed to solve forward and inverse problems governed by differential equations in a mesh-free setting~\cite{raissi2019physics, song2022versatile}.

Although multi-layer perceptrons (MLPs) are universal function approximators, training them to fit natural signals such as images, videos, or audio is often challenging. Analytical and empirical results are presented in previous studies \cite{rahaman2019spectral, xu2019frequency} showing that ReLU based deep neural networks are prone to `spectral bias' and are ill-conditioned for low-dimensional coordinate-based training tasks. To overcome spectral bias and the associated lazy training, coordinate inputs are often mapped to positional encodings~\cite{mildenhall2021nerf} or random Fourier features~\cite{rahimi2007random,tancik2020fourier}, which as shown in \cite{yuce2022structured} increase the `expressive power' of INRs. Alternatively, Sitzmann et al. \cite{sitzmann2020implicit} have proposed sinusoidal representation networks (SIRENs) using periodic activation function $\phi(x) = \sin(\omega_0 x)$,
\begin{equation} \label{eqn:siren}
\begin{aligned}
f^{\text{SIREN}}(\mathbf{x}; \theta) 
&= \mathbf{W}^{(L)} \mathbf{h}^{(L-1)} + \mathbf{b}^{(L)}, \\[4pt]
\mathbf{h}^{(l)} 
&= \phi^{\text{sin}}\!\left(\mathbf{W}^{(l)} \mathbf{h}^{(l-1)} + \mathbf{b}^{(l)}\right),
\quad \mathbf{h}^{(0)} = \mathbf{x}.
\end{aligned}
\end{equation}
where \( \mathbf{x} \in \mathbb{R}^d \) is the input, \( \theta = \{\mathbf{W}^{(l)}, \mathbf{b}^{(l)}\}_{l=1}^L \) are the network parameters, $L-1$ hidden layers, and \( \omega_0 \) is the activation periodicity specified as hyperparameter. They also propose a principled initialization scheme for the weights $\mathbf{W}$, to ensure network's pre- and post-activations at initialization remain narrowly bounded. Their approach faithfully captures both the discrete data and its gradients with high-fidelity, enabling applications in computer vision and the solution of scientific differential equations. Various other variants related to SIREN were also proposed, such as \cite{ziyin2020neural, liu2013fourier, silvescu1999fourier}.


While SIRENs with the uniform weight initialization scheme \cite{sitzmann2020implicit} can mitigate spectral bias and effectively fit 2D images, videos, and 3D geometries, they often perform poorly on low-dimensional signals such as audio when low-frequency components contribute minimally to the target. As pointed out in \cite{yuce2022structured}, \textit{the capacity of network to represent a target signal's frequency components does not guarantee efficient learning of that signal}. The reconstruction accuracy is rather closely tied to the spectral content of the signal and the inductive bias of network initialization; for instance, under fixed initialization, SIRENs struggle with signals dominated by very high or very low frequencies. Prior work \cite{mehta2021modulated, liu2024finer} shows that SIRENs exhibit overfitting as signal length and high-frequency content increase. An easy way to get around this problem is to increase the input dimensionality by mapping the input coordinates to a random Fourier feature space~\cite{rahimi2007random, tancik2020fourier, mildenhall2021nerf}. However, positional embeddings lead to a quadratic increase in the parameter count with respect to the embedding dimension, and consequently with the hidden-layer width. In this work, we examine the fitting challenges of SIREN on increasingly high-frequency targets and characterize the limitations of its traditional weight initialization \cite{sitzmann2020implicit}, particularly the \textit{spectral bottleneck} phenomenon, where SIREN fails to recover all frequency modes and collapses to a zero-valued function. We then propose a noisy weight initialization scheme WINNER that aims to mitigate the inductive bias at initialization. Figure~\ref{fig:abstract} illustrates the spectral bottleneck issue of the standard SIREN~\cite{sitzmann2020implicit} and the proposed weight initialization scheme which is being employed in SIREN$^2$. Representation accuracy of SIREN is compared with the proposed method on several synthetically generated audio signals, each containing 150{,}000 samples. Although, the spectral bottleneck phenomenon in Fig.~\ref{fig:abstract} is demonstrated through synthetic signals, such a phenomenon was noted to also observe in fitting natural signals, examples include \texttt{tap.wav} and \texttt{relay.wav} audio clips which can be accessed through the project's repository (link in first page). All experiments in Figure~\ref{fig:abstract} use a SIREN architecture with four hidden layers with 222 features in each layer.

\noindent The key contributions of this work are:
\begin{enumerate}[itemsep=0pt, parsep=0pt, partopsep=0pt, topsep=0pt]
    \item Show that SIRENs can suffer from a spectral bottleneck, and analyze their training dynamics in Fourier space to understand how this phenomenon unfolds during training.
    
    \item A new target-aware weight perturbation scheme WINNER, that adds noise into the baseline uniformly distributed weights of SIRENs to broaden their frequency support according to the target and thereby avoid spectral bottleneck. 
    
    \item The influence of the proposed noise addition scheme on the spectral distribution of pre-/post-activations and the eigenbasis of the empirical Neural Tangent Kernel (NTK) at initialization is analyzed. 
    
\end{enumerate}

\section{Related work}
\textbf{Activation functions for INRs.}  
Sinusoidal activation functions have been considered in neural networks since 1999 \cite{sopena1999neural}, but such models were often regarded as difficult to train \cite{parascandolo2016taming}. Sitzmann et al. \cite{sitzmann2020implicit} addressed this by introducing SIREN, which uses a tailored initialization scheme that enhances stable training of sinusoidal INRs. Later methods designed alternative periodic or adaptive activations \cite{jagtap2022deep} to extend frequency support, such as FINER \cite{liu2024finer}, FINER++ \cite{zhu2024finerplusplus}, and HOSC \cite{serrano2024hosc}, while others explored Gaussian or kernel-inspired nonlinearities \cite{ramasinghe2022beyond,chng2022gaussian}. DINER \cite{zhu2024disorder} further demonstrated that near-lossless signal representation can be achieved by augmenting a hash-table to a traditional INR backbone albeit with an additional memory overhead to store the hash data. Additional activation designs include Snake, a periodic trainable activation for learning periodic targets \cite{liu2020neural}; WIRE, which uses Gabor wavelet activations to capture frequency and spatial locality \cite{saragadam2023wire}; a sampling-theoretic case for using \(\mathrm{sinc}\) activations in INRs \cite{saratchandran2024sampling}; and STAF, a family of trainable sinusoidal activations \cite{morsali2025staf}.

\textbf{Weight/bias initialization.} Recent works like FINER~\cite{liu2024finer} and FINER++~\cite{zhu2024finerplusplus} explore bias initialization strategies to reduce eigenvalue decay in the empirical NTK, thus increasing effective frequency support. In geometry-focused INRs, SAL initializes network weights so that the signed distance function (SDF) starts as a 3D sphere~\cite{atzmon2020sal}; DiGS later introduced a multifrequency geometric initialization (MFGI) that extends this idea to sine activations and preserves high-frequency content~\cite{ben2022digs}; and in sinusoidal INRs, SIREN showed that without a tailored scheme, sine activations fail to reconstruct signals, motivating an initialization that stabilizes optimization~\cite{sitzmann2020implicit}. The works of Tang et al.~\cite{tang2025structured} and Varre et al.~\cite{varre2023spectral} (for linear networks) demonstrate how initialization can govern parameter optimization. Complementary strategies have been developed for sinusoidal and general INRs: VI$^3$NR derives an activation-agnostic, variance-preserving rule for INR regression~\cite{hewa2025vi3nr}; TUNER introduces frequency-sampling input initialization together with amplitude-bounded hidden-layer weights to control the model bandlimit during training~\cite{novello2025tuning}; and Yeom et al.~\cite{yeom2024fast} propose an initialization scaling for sinusoidal representation networks to speed up optimization and alleviate spectral bias. Prior designs are often specialized (geometry for SAL/DiGS and sine activations for SIREN) and typically do not target the joint distribution of pre- and post-activations across layers. In this work, we introduce a \textit{target aware weight initialization scheme that takes into account the spectral characteristics of the target} as well as preserves the layer distributions for stability and accuracy across INR tasks.

\textbf{Audio INRs.} A variety of models have been proposed to enhance the ability of neural networks to represent audio signals~\cite{choudhury2024nerva, kim2023regression, kim2023generalizable, li2024asmr}. Techniques such as hypernetworks, positional encoding, and auxiliary networks have also been explored to improve reconstruction fidelity and reduce reconstruction noise~\cite{marszalek2025hypernetwork, lanzendorfer2023siamese}. Audio-specific works include Siamese SIREN~\cite{lanzendorfer2023siamese} for compression, HyperSound~\cite{szatkowski2022hypersound} for INR generation via meta-learning, and INRAS~\cite{su2022inras} for spatial audio modeling.  Neural audio representations continue to find applications in classification, speech synthesis, sound event detection, encoding, and embedding~\cite{sharan2021benchmarking, tessarini2022audio, hershey2017cnn, donahue2018adversarial, shen2018natural, aironi2021graph}. 

In addition, analytical and empirical analysis of Basri et al. \cite{basri2020frequency, ronen2019convergence} indicate that for 1D signals, the convergence rate under gradient descent scales inversely with the square of the target signal's frequency (i.e., $1/k^2$), and this frequency-dependent slowdown grows exponentially with the input dimension~\cite{bietti2019inductive,cao2019towards}. This frequency-dependent limitation has motivated several INR-specific methods. For example, FR-INR reparameterizes MLP weights as linear combinations of fixed Fourier bases to mitigate low-frequency bias, while IGA estimates a gradient transform from a sampled empirical NTK and applies it during updates to accelerate learning of high-frequency components \cite{shi2024frinr,shi2024iga}.

\section{Understanding `Spectral Bottleneck'}

\subsection{Challenges of Fitting 1D Signals with SIREN} \label{sec:issues}
To maintain uniform feature scales and prevent exploding gradient issues, it is standard practice to normalize network inputs during training. However, for 1D data with high sampling rates, such as audio signals, input normalization introduces a mismatch between the frequency content of the signal and the frequency range effectively supported by the network. For instance, the input normalization \( \mathbf{x} \sim \mathcal{U}(-1, 1) \), scales the maximum frequency of a discretely sampled signal in proportion to its sampling rate. 

As for SIREN \cite{sitzmann2020implicit}, increasing the activation periodicity $\omega_0$ (Eqn \ref{eqn:siren}) may seem like a potential solution to mitigate the frequency bias induced by input scaling. However, since the pre-activation values of a SIREN predominantly lie within the range of $[-3,3]$ due to its weight initialization scheme, a high $\omega_0$ can render the activation function insensitive and alter the activation distribution. To address this, Sitzmann et al.~\cite{sitzmann2020implicit} scaled the inputs by a factor of 100, sampling them from a wider range $\mathbf{x} \sim \mathcal{U}(-100, 100)$, which effectively increases the power spectral density of the pre-activations by approximately the same factor (see Proposition~\ref{prop:uniform_psd_scaling}). While this strategy is effective for low-frequency-dominant signals, SIREN with input scaling can still fail on high-frequency-dominant signals. An example illustrating this failure is discussed in Fig.~\ref{fig:problemWSiren}.
\begin{figure}[t!]
    \centering
    \includegraphics[width=\linewidth]{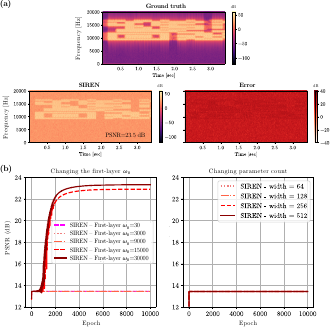}
    \caption{\textbf{An example case where the standard weight initialization scheme of SIREN fails to reconstruct an audio clip.} (a) Spectrograms of ground truth  (\texttt{tetris.wav}) (top), SIREN reconstruction (middle), and the error map (bottom) computed using first layer $\omega_0=30000$ and hidden layer width 128. (b,c) PSNR histories of SIREN for different input scalings and network sizes. A five-layer MLP was used with a learning rate scheduler reducing the rate by 2\% every 20 epochs from an initial value of $10^{-4}$.}
    \label{fig:problemWSiren}
\end{figure}

\begin{proposition} \label{prop:uniform_psd_scaling}
If $\mathbf{w}_a \sim \mathcal{U}(-a,a)^d$, then $z_a(\mathbf{x})=\mathbf{w}_a^\top \mathbf{x}$ can be written as $a\,z_1(\mathbf{x})$ with $\mathbf{w}_1 \sim \mathcal{U}(-1,1)^d$, hence its Fourier transform scales by $a$ and the power spectral density satisfies $S_a(k)=a^2 S_1(k)$.
\end{proposition}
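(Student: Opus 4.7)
The plan is to proceed in three short steps: (i) a change-of-variables identification of $\mathbf{w}_a$ with $a\mathbf{w}_1$ in distribution, (ii) linearity of the inner product, and (iii) linearity (and modulus-squared homogeneity) of the Fourier transform.

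First, I would observe that if $\mathbf{w}_a \sim \mathcal{U}(-a,a)^d$, then writing $\mathbf{w}_a = a\mathbf{w}_1$ with $\mathbf{w}_1 := \mathbf{w}_a/a$ yields $\mathbf{w}_1 \sim \mathcal{U}(-1,1)^d$. This is immediate from the product form of the density: each coordinate has density $(2a)^{-1}\mathbf{1}_{[-a,a]}$, and the change of variables $w_{a,i} = a w_{1,i}$ multiplies each marginal by $a$, absorbing the scale factor and yielding the uniform density on $[-1,1]$. So almost surely (on a common probability space via this coupling) $\mathbf{w}_a = a\mathbf{w}_1$.

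Next, by linearity of the inner product in $\mathbf{w}$, for every $\mathbf{x}$,
\begin{equation*}
z_a(\mathbf{x}) = \mathbf{w}_a^\top \mathbf{x} = (a\mathbf{w}_1)^\top \mathbf{x} = a\,\mathbf{w}_1^\top \mathbf{x} = a\,z_1(\mathbf{x}),
\end{equation*}
which gives the sample-path identity $z_a = a z_1$ as a function of $\mathbf{x}$. I would then apply the Fourier transform in the $\mathbf{x}$-variable; since the Fourier transform is linear, $\widehat{z_a}(k) = a\,\widehat{z_1}(k)$. Taking modulus squared (and, if the PSD is defined through an expectation over $\mathbf{w}$, pulling the constant $a^2$ outside the expectation), we obtain $S_a(k) = |\widehat{z_a}(k)|^2 = a^2 |\widehat{z_1}(k)|^2 = a^2 S_1(k)$.

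The proof is essentially bookkeeping — there is no real technical obstacle — but the one subtlety worth being careful about is the meaning of $S(k)$. If $S$ is defined for a single realization of $\mathbf{w}$ the scaling follows immediately from linearity; if $S$ is defined via $\mathbb{E}_{\mathbf{w}}[|\hat{z}(k)|^2]$ (the more natural reading in the SIREN context), I would justify factoring $a^2$ out of the expectation using the coupling $\mathbf{w}_a = a\mathbf{w}_1$ established in step one. Either interpretation yields the stated identity.
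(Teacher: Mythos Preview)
Your proposal is correct and mirrors the paper's own argument, which is essentially embedded in the proposition statement itself rather than given as a separate proof: the identification $\mathbf{w}_a \stackrel{d}{=} a\mathbf{w}_1$, linearity of the inner product to get $z_a = a z_1$, linearity of the Fourier transform, and squaring for the PSD. Your write-up simply makes each of these steps explicit, including the coupling and the remark on the two possible readings of $S(k)$, neither of which the paper spells out.
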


\textbf{An example case when SIREN fails to fit a high-frequency signal:} We consider the reconstruction of a high-frequency audio signal \( f: \mathbb{R}^2 \to \mathbb{R}^d \), using the example \texttt{tetris.wav} available in the linked GitHub repository. Experiments conducted using SIREN in Fig.~\ref{fig:problemWSiren}(a,b) show that, although input scaling improves PSNR value, significant errors remain in the reconstructed signal, as shown in the spectrogram error. A maximum PSNR value of just $\sim$23.5dB was achieved using a scaling factor of $3\times 10^4$. We also experimented by increasing network size (Fig. \ref{fig:problemWSiren}b), changing scheduler parameters, and changing the frequency parameter $\omega_0$, none of which resolve this issue. This reveals that SIREN, with its default weight initialization, struggles to fit such high-frequency dominant signals due to the \textit{spectral bottleneck} phenomenon. Other 1D examples are shown in Fig.~\ref{fig:abstract}b, where SIREN exhibits a spectral bottleneck for signals S4–S8, all lacking low-frequency components.

\begin{remark}
    Repeating the experiment in Fig.~\ref{fig:problemWSiren} with alternative architectures, including WIRE \cite{saragadam2023wire}, FINER \cite{liu2024finer}, HOSC \cite{serrano2024hosc}, and Gauss \cite{ramasinghe2022beyond}, all yield the same outcome: convergence to a fixed PSNR of 13.4 with a zero-valued output. This pathology is therefore not just confined to SIRENs, but is shared across other related deep neural networks. We further observed that mapping the inputs to random Fourier features \cite{rahimi2007random, tancik2020fourier} or adopting a broader bias initialization, as proposed in \cite{zhu2024finerplusplus}, mitigates this failure mode. A formal characterization of this behavior is presented in Sec.~\ref{sec:ntksiren}.
\end{remark}

\subsection{Learning Dynamics in Fourier Space} \label{sec:ntksiren}

To explore the optimization trajectory that leads to the spectral bottleneck associated with SIRENs, as observed in Fig.~\ref{fig:problemWSiren}, we examine the network's learning dynamics in Fourier space through the lens of its empirical Neural Tangent Kernel (NTK) \cite{rahaman2019spectral, wang2022and, basri2020frequency, tancik2020fourier}. Before analyzing the training dynamics, we first define the NTK eigenbasis and discuss its interpretation. 

Jacot et al.~\cite{jacot2018neural} showed that in the infinite-width limit, fully connected networks trained with infinitesimal learning rates follow linear training dynamics governed by the NTK. The NTK is a Gram matrix defined as,
\begin{equation}
    \Theta(\mathbf{x}, \mathbf{x}') = \nabla_{\theta} \Phi(\mathbf{x}; \theta) \cdot \nabla_{\theta} \Phi(\mathbf{x}'; \theta),
\end{equation}
where $\Phi(\mathbf{x}; \theta)$ is the network output and $\theta$ its parameters. In the infinite-width regime, $\Theta$ remains constant during training, allowing for a closed-form linear model of training dynamics~\cite{lee2019wide}. The evolution of the output error $\mathcal{E}$ then satisfies,
\begin{equation}
    \frac{d \mathcal{E}}{dt} = -2 \Theta \mathcal{E}, \quad \Rightarrow \quad \mathcal{E}(t) = \mathcal{E}(\theta_0) e^{-2\Theta t}.
\end{equation} \label{eqn:ntk_governEqn}
\noindent This resembles a \textit{first-order rate equation} governing exponential decay, with $\Theta$ acting as the decay factor. To isolate the eigenmodes and their respective decay rates (eigenvalues), the NTK square matrix can be diagonalized as $\Theta = \mathbf{Q}^\top \Lambda \mathbf{Q}$. Substituting the diagonalized form into Eqn.~\ref{eqn:ntk_governEqn} and simplifying yields, $\mathcal{E}(t) = \mathbf{Q}^\top e^{-2\Lambda t} \mathbf{Q} \mathcal{E}(\theta_0)$. This equation implies that the components of reconstruction error $\mathcal{E}$ associated with larger eigenvalues decay more rapidly than those associated with smaller eigenvalues. Applying the Fourier transform to the error evolution equation yields:
\begin{equation} \label{eqn:error_fft}
    \hat{\mathcal{E}}(\theta_t) = \mathcal{E}(\theta_0) \hat{\mathbf{Q}}^{\top} e^{-2 \Lambda t} \hat{\mathbf{Q}}.
\end{equation}

where $\hat{\mathbf{Q}}$ denotes the Fourier-transformed eigenbasis of the NTK, and $\hat{\mathcal{E}}$ represents the error expressed in the frequency domain. This equation implies that, the Fourier components of error $\mathcal{E}$ are correlated to Fourier components of NTK eigenvectors $\hat{\mathbf{Q}}$. 

\begin{figure}[t!]
    \centering
    \includegraphics[width=\linewidth]{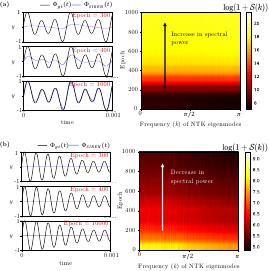}
    \caption{\textbf{Contrasting learning dynamics of SIREN for low and high frequency-dominant targets.} 
    (a) Output evolution when fitting the low-frequency signal of Eqn.~\ref{eqn:signals}. The right subplot shows the NTK spectral energy, $\log(1 + \mathcal{S}(k))$, across frequency eigenmodes $k$ during training, exhibiting a steady increase that indicates effective learning. 
    (b) Output and NTK spectral energy evolution for a high-frequency signal. The left subplot shows that SIREN fails to match the ground truth, while the right subplot reveals a suppression of spectral energy over training, indicating difficulty in representing high-frequency-dominant signals.}
    \label{fig:learnFreqs}
\end{figure}

Based on this premise, next we consider a toy problem where we attempt to supervise SIREN to fit two signals, one with predominantly low-frequency content, and another signal with predominantly high-frequency content. The signals are defined as follows:
\begin{equation}
    f(t) = 
\begin{cases}
\sum\limits_{i=1}^{3} A_k \sin(2\pi k_i^{(L)} t), & \text{(Low-frequency signal)} \\
\sum\limits_{i=1}^{3} A_k \sin(2\pi k_i^{(H)} t), & \text{(High-frequency signal)}
\end{cases}
\label{eqn:signals}
\end{equation}
\noindent The amplitudes are normalized as \( A = [\frac{1}{7}, \frac{2}{7}, \frac{4}{7}] \), and the frequencies are defined by \( k_{i=1,3}^{(L)} = [0.25\tilde{k}, \; 0.50\tilde{k}, \; 0.75\tilde{k}] \), \( k_{i=1,3}^{(H)} = [0.85\tilde{k}, \; 0.90\tilde{k}, \; 0.95\tilde{k}] \), with Nyquist frequency \( \tilde{k} = \frac{N}{2} \) and \( N = 2^{16} \) samples.

We use a five-layer SIREN with 128 hidden units, $\omega_0 = 30$, and inputs scaled by 10, giving a first-layer frequency of 300. The network is trained to fit the low- and high-frequency signals in Eqn.~\ref{eqn:signals}. During training, we track both the reconstructed signal and the cumulative eigenvalue-weighted NTK power spectrum:
\begin{equation}
    \mathcal{S}(k) = \sum_{i=1}^n \lambda_i \, \left|\widehat{\mathbf{v}}_i(k)\right|^2,
\end{equation}
where $\lambda_i$ are NTK eigenvalues and $\widehat{\mathbf{v}}_i(k)$ are the Fourier coefficients of the corresponding eigenvectors. $\mathcal{S}(k)$ measures the contribution of each frequency mode to the network's representation during training. 

\begin{figure}[t!]
    \includegraphics[width=\linewidth]{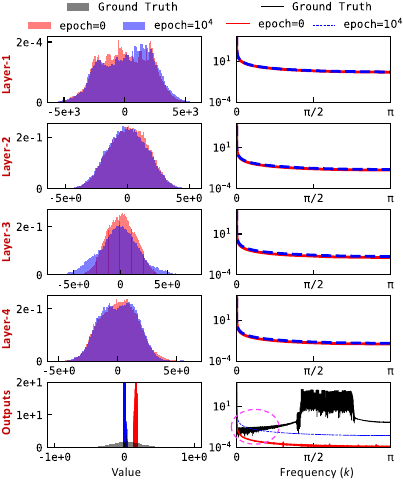}
    \caption{\textbf{Poor frequency support of SIREN for fitting high-frequency dominant targets.} Distributions (column-1) and cumulative power spectra (column-2) of hidden-layer pre-activation and network outputs for a four-layer SIREN at epoch $0$ and $10^{4}$ when fitting \texttt{tetris.wav}. Across all layers, the spectral content is concentrated far below that of the high-frequency target, indicating insufficient frequency support.}
    \label{fig:freq_support}
\end{figure}

As shown in Fig.~\ref{fig:learnFreqs}, SIREN accurately reconstructs the low-frequency target, with $\mathcal{S}(k)$ progressively increasing in the relevant modes. However, for the high-frequency target, the output remains near zero and $\mathcal{S}(k)$ is steadily suppressed, indicating an inability to fit high-frequency components. To address this phenomenon, we introduce a weight perturbation scheme in section~\ref{sec:scheme}.

\textbf{Frequency support.} We characterize the distributions and frequency response of pre-activations via the cumulative power spectral density (PSD) $\mathrm{PSD}(k) = \sum_{j=1}^{N_h} |\widehat{x}_{\mathrm{pre},j}(k)|^2$ of hidden-layer pre-activations. This experiment included the fitting of high-frequency data of \texttt{tetris.wav}. The signal contains 150{,}000 uniformly spaced samples in $[-1,1]$ and is evaluated using a four-layer SIREN. Fig.~\ref{fig:freq_support} shows that across all hidden layers, the value distributions remain centered and narrow, while the cumulative PSD is heavily biased toward low frequencies. The network outputs, along with the PSD of intermediate-layer pre-activations, exhibit a spectral profile that deviates significantly from the ground-truth spectrum, which is dominant of high-frequencies. Even after $10^{4}$ epochs, the mismatch persists, revealing a fundamental limitation: the effective frequency support of SIREN pre-activations falls far off from the target spectrum. This restriction acts as a learning bottleneck caused by the mismatch of spectral profiles of the network pre-activations and the target. We term such a scenario as `spectral bottleneck' and prevents the network from reconstructing signals like \texttt{tetris.wav}.

\begin{remark}
    As observed in Figs.~\ref{fig:learnFreqs} and \ref{fig:freq_support}, an extreme case of a spectral bottleneck is characterized by: (a) progressive attenuation of the NTK eigenmode energy with each training epoch, (b) failure of the network to represent even the low-frequency components (highlighted in Fig.~\ref{fig:freq_support}) that lie within the nominal frequency support of SIREN, and (c) convergence to a zero-valued output.
\end{remark}

\section{WINNER: Weight Initialization with Noise for NEural Representations} \label{sec:scheme}
A \textit{weight perturbation} scheme is proposed to address the spectral bottleneck of SIREN observed under its default weight initialization scheme~\cite{sitzmann2020implicit}. Traditional weight initializations, such as He and Glorot, aim to maintain the variance of pre- and post-activations within controlled bounds to avoid exploding or vanishing gradients, typically by scaling the weights inversely with the fan\_in or fan\_out. For example, the Glorot-style initialization used by Sitzmann et al.~\cite{sitzmann2020implicit} samples weights as $W_{jk} \sim \mathcal{U}\left(- \frac{1}{\omega_0} \sqrt{\frac{6}{\text{fan\_in}}}, \frac{1}{\omega_0} \sqrt{\frac{6}{\text{fan\_in}}} \right)$ to ensure unit standard deviation for all pre-activations ($\mathbf{W}\cdot\mathbf{x} + \mathbf{b}$) of the SIREN. 

In Sec.~\ref{sec:ntksiren}, we have emphasized that the root cause of the spectral bottleneck (Fig.~\ref{fig:learnFreqs}) is the mismatch of spectral energy between the target signal and network activations at initialization. We address this issue using Proposition~\ref{prop:uniform_psd_scaling} to alter the Fourier representation of the network's activations and outputs at initialization. To this end, we introduce WINNER, a weight perturbation scheme in which a Gaussian noise is added to the uniformly initialized weights that exist between the inputs and the second hidden layer. So, the weights immediately upstream of first and second hidden layers are perturbed as,
\begin{equation}
    W_{jk}^{(l)} \leftarrow W_{jk}^{(l)} + \eta_{jk}^{(l)},
    \label{eqn:perturbScheme}
\end{equation}
\noindent where the noise matrix \( \eta_{jk}^{l} \) is sampled from a normal distribution,
\begin{equation}
    \eta_{jk}^{(l)} \sim \mathcal{N}\left(0, \frac{s}{\omega_0}\right) \quad , \quad
    s =
    \begin{cases}
        s_0, & l=1, \\
        s_1, & l=2, \\
        0, & l=3,...,L. \\
    \end{cases}
\end{equation} \label{eqn:noiseScheme}
The Gaussian scale parameters \([s_0, s_1]\) control the width of the pre-activation distributions and their spectra. Based on the WINNER perturbation scheme, we introduce SIREN$^2$, a perturbed variant of SIREN, (the extra N in N$^2$ denotes \emph{Noise}).
\begin{equation} \label{eqn:siren_square}
\begin{gathered}
\text{SIREN}^2:\; f(\mathbf{x}; \theta)
= \mathbf{W}^{(L)} \mathbf{h}^{(L-1)} + \mathbf{b}^{(L)}, \\[4pt]
\mathbf{h}^{(l)} =
\begin{cases}
\mathbf{x}, & l = 0 \text{ (inputs)}, \\[3pt]
\phi^{\text{sin}}\!\big((\mathbf{W}^{(l)} + \eta_{jk}^{(l)}) \mathbf{h}^{(l-1)} + \mathbf{b}^{(l)}\big), & l = 1, 2, \\[3pt]
\phi^{\text{sin}}\!\big(\mathbf{W}^{(l)} \mathbf{h}^{(l-1)} + \mathbf{b}^{(l)}\big), & l = 3, \dots, L-1.
\end{cases}
\end{gathered}
\end{equation}
Although the weight perturbations are added only up to the second hidden layer, their effect propagates downstream all the way to the outputs. This is shown empirically in Sec.~\ref{sec:spectral} and in Supplementary Material Sec.~\ref{suppl:fulldists} for the full network. The goal of the proposed noise addition scheme is to enhance the functional sensitivity between the outputs and network parameters necessary to allow the parameter updates required for fitting high-frequency modes.


\begin{figure*}[t!]
    \centering
    \includegraphics[width=\linewidth]{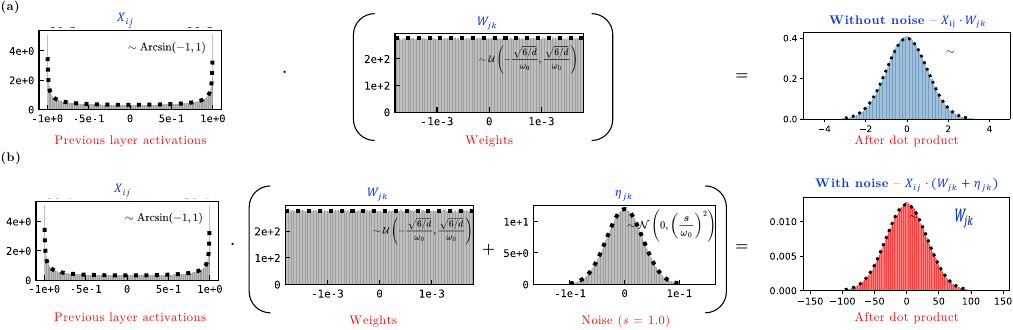}
    \caption{Distributions of the dot product between inputs $X_{ij} \sim \text{Arcsin}(-1,1)$ and weights initialized under two initialization schemes: (a) standard uniform weights $W_{jk} = \sim \mathcal{U}\left(- \omega_0^{-1}\sqrt{6/d}, \omega_0^{-1}\sqrt{6/d} \right)$, and (b) WINNER, in which uniform weights are perturbed with Gaussian noise, $W_{jk} + \eta_{jk}$. The noise addition increases the standard deviation of the dot product from $1$ to $\sqrt{1 + \frac{d s^2}{2}}$, where $d$ is the input dimension (fan\_in). This increase closely matches the analytically predicted value (Theorem~\ref{thr:theorem1}) shown by the dashed black line.}
    \label{fig:dists}
\end{figure*}

Fig.~\ref{fig:dists} illustrates the influence of the proposed noise perturbation on pre-activation distributions in a sinusoidal representation network. Assuming layer-1 activations ($X_{ij}$) follow an arcsine distribution on \((-1,1)\) (as established analytically and empirically in \cite{sitzmann2020implicit}), we compare the distributions of the dot product between $X_{ij}$ and weights connecting layer$1 \rightarrow2$ initialized with and without the noise $\eta_{jk}$. The results show that the overall structure remains consistent with the unperturbed network; Gaussian for both SIREN and SIREN$^2$. However, the added noise increases the standard deviation of the Gaussian pre-activations. This effect is analytically derived in Theorem~\ref{thr:theorem1} and empirically confirmed in Fig.~\ref{fig:dists}.

\begin{theorem} \label{thr:theorem1}
Let the following matrices be defined:
\begin{itemize}
    \item Input Matrix $X \in \mathbb{R}^{n \times d}$: Each entry $X_{ij}$ is independently sampled from an arcsine distribution $\mathcal{A}(-1, 1)$, which has a mean of $0$ and a variance of $1/2$.
    \item Weight Matrix $W \in \mathbb{R}^{d \times d}$: Each entry $W_{jk}$ is independently sampled from a uniform distribution $\mathcal{U}\left(-\frac{\sqrt{6/d}}{\omega_0}, \frac{\sqrt{6/d}}{\omega_0} \right)$ for a given $\omega_0 > 0$.
    \item Noise Matrix $\eta \in \mathbb{R}^{d \times d}$: Each entry $\eta_{jk}$ is independently sampled from a normal distribution $\mathcal{N}(0, (s/\omega_0)^2)$ for some scale parameter $s > 0$.
\end{itemize}
Consider the perturbed matrix $\omega_0 \cdot X (W + \eta) = Y'$, where $Y' \in \mathbb{R}^{n \times d}$. Then, for each entry \( Y'_{ik} \) of the matrix \( Y' \), its distribution is approximately Gaussian with zero mean and standard deviation \( \sqrt{1 + \frac{d s^2}{2}} \).
\end{theorem}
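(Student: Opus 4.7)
The plan is to fix a single entry $Y'_{ik}$, split it into a ``clean'' and a ``noisy'' contribution, compute mean and variance by direct expansion, and then argue approximate normality via a Central Limit Theorem in the fan-in $d$.

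First I would write
\begin{equation*}
Y'_{ik} \;=\; \omega_0 \sum_{j=1}^{d} X_{ij}\bigl(W_{jk} + \eta_{jk}\bigr) \;=\; A_{ik} + B_{ik},
\end{equation*}
with $A_{ik} := \omega_0\sum_{j} X_{ij}W_{jk}$ and $B_{ik} := \omega_0\sum_{j} X_{ij}\eta_{jk}$. Since $X$, $W$ and $\eta$ are mutually independent with $\mathbb{E}[X_{ij}]=\mathbb{E}[W_{jk}]=\mathbb{E}[\eta_{jk}]=0$, both summands have mean zero. A short check gives $\mathrm{Cov}(A_{ik},B_{ik})=0$, since every cross term $\mathbb{E}[X_{ij}X_{ij'}W_{jk}\eta_{j'k}]$ factors through the zero means of $W$ and $\eta$.

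Next, using $\mathrm{Var}(W_{jk}) = a^{2}/3 = 2/(d\omega_0^{2})$ with $a=\omega_0^{-1}\sqrt{6/d}$, $\mathrm{Var}(\eta_{jk}) = s^{2}/\omega_0^{2}$, and $\mathrm{Var}(X_{ij})=1/2$, the product-of-independents variance identity yields
\begin{equation*}
\mathrm{Var}(A_{ik}) = \omega_0^{2}\cdot d \cdot \tfrac{1}{2}\cdot\tfrac{2}{d\omega_0^{2}} = 1, \qquad
\mathrm{Var}(B_{ik}) = \omega_0^{2}\cdot d \cdot \tfrac{1}{2}\cdot\tfrac{s^{2}}{\omega_0^{2}} = \tfrac{ds^{2}}{2},
\end{equation*}
so adding them gives $\mathrm{Var}(Y'_{ik}) = 1 + ds^{2}/2$, matching the claimed value.

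For the distributional statement I would handle the two pieces separately and then recombine. Conditional on the row $X_{i,:}$, the noise term $B_{ik}$ is a linear combination of independent Gaussians, hence exactly $\mathcal{N}\bigl(0,\,s^{2}\sum_{j}X_{ij}^{2}\bigr)$; the law of large numbers concentrates $s^{2}\sum_{j}X_{ij}^{2}$ at $ds^{2}/2$ as $d$ grows, so $B_{ik}$ is itself approximately Gaussian. The uniform-weight term $A_{ik}$ is a sum of $d$ independent bounded mean-zero summands for which the Lindeberg condition is trivially met, so the Lindeberg--Feller CLT gives $A_{ik}\approx\mathcal{N}(0,1)$. Conditional on $X$, the pieces $A_{ik}$ and $B_{ik}$ are independent (they depend on the disjoint random sources $W$ and $\eta$), so their sum is approximately $\mathcal{N}\bigl(0,\,1 + ds^{2}/2\bigr)$.

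The main obstacle, and the reason for the qualifier ``approximately'' in the statement, is quantifying the Gaussian approximation at finite width: the weights $W_{jk}$ are uniform rather than Gaussian and the fan-in $d$ is only moderately large, so one would either invoke a Berry--Esseen bound on $A_{ik}$ to control the error explicitly, or simply observe that for the widths actually used in practice the empirical pre-activation distributions already collapse onto the Gaussian shape predicted by the moment calculation (consistent with Fig.~\ref{fig:dists}). Everything else in the argument is routine moment arithmetic.
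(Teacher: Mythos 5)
Your proposal follows essentially the same route as the paper's proof: the identical decomposition $Y'=Y_1+Y_2$ into the clean term $\omega_0 XW$ and the noise term $\omega_0 X\eta$, the same moment arithmetic giving variances $1$ and $ds^2/2$, and the same CLT appeal for approximate normality. If anything you are slightly more careful than the paper, which asserts that $Y_1$ and $Y_2$ are independent (they are not, since both depend on $X$) whereas you correctly argue only that they are uncorrelated and conditionally independent given $X$, which is what is actually needed for the variances to add.
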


\begin{proof}
The random distribution $Y'$ can be decomposed as $Y' = Y_1 + Y_2$, with $Y_1 = \omega_0 (X W_{jk})$ and $Y_2 = \omega_0 (X \eta)$. Since $X$ and $W_{jk}$ are independent, the entries of $Y_1$ are sums of products of independent zero-mean random variables $X_{ij}$ and $W_{jk}$. Following \cite{sitzmann2020implicit} (see their Theorem 1.8) and by the central limit theorem (CLT), each $Y_{1,ik}$ is approximately Gaussian with mean and variance $\mathbb{E}[Y_{1,ik}] = 0$ and $\mathrm{Var}[Y_{1,ik}] = 1$ respectively. Similarly, for $Y_{2,ik} = \omega_0 \sum_{j=1}^d X_{ij} \eta_{jk}$, since $X_{ij}$ and $\eta_{jk}$ are independent and have zero-mean,
\begin{equation*}
\begin{aligned}
    \mathrm{Var}[X_{ij} \eta_{jk}] &= \mathbb{E}[(X_{ij} \eta_{jk})^2] - (\mathbb{E}[X_{ij} \eta_{jk}])^2 \\
    & = \mathbb{E}[X_{ij}^2] \cdot \mathbb{E}[\eta_{jk}^2] - 0 = \frac{1}{2} \cdot \left(\frac{s^2}{\omega_0^2}\right).
\end{aligned}
\end{equation*}
Now, for the sum,
\[
\sum_{j=1}^{d} \mathrm{Var}\left[X_{i j} \eta_{j k}\right]=\sum_{j=1}^{d} \frac{s^2}{2 \omega_0^2}=d \cdot \frac{s^2}{2 \omega_0^2}.
\]
Since $Y_{2,ik} = \omega_0 X_{i j} \eta_{j k}$, the variance scales by a factor of $\omega_0^2$, yielding,
\[
\mathrm{Var}[Y_{2,ik}] = \omega_0^2 \cdot \mathrm{Var}[X_{i j} \eta_{j k}] = \frac{d s^2}{2}.
\]
Finally, since $Y_1$ and $Y_2$ are independent, their variances add:
\[
\mathrm{Var}[Y'_{ik}] = \mathrm{Var}[Y_{1,ik}] + \mathrm{Var}[Y_{2,ik}] = 1 + \frac{d s^2}{2}.
\]
\[
\Rightarrow \mathbb{E}[Y'_{ik}] = 0, \quad \mathrm{Std}[Y'_{ik}] = \sqrt{1 + \frac{d s^2}{2}}.
\]

\end{proof}

\begin{proposition}
Given that the standard deviation of pre-activations in the layer-2 of a SIREN scale by a factor of $\sqrt{1 + \frac{d s^2}{2}}$ under the weight perturbation scheme in Eqn.~\ref{eqn:perturbScheme}, adding white noise $\eta_{jk}$ to the uniform weights $W_{jk} \sim \mathcal{U}\left(- \frac{1}{\omega_0} \sqrt{\frac{6}{\text{fan\_in}}}, \frac{1}{\omega_0} \sqrt{\frac{6}{\text{fan\_in}}} \right)$ is approximately equivalent to scaling the activation frequency $\omega_0$ by the same factor. This approximation holds under the assumption that the contribution of the bias vector to the pre-activation statistics is negligible, which is justified for large $d$ (fan\_in) values.
\end{proposition}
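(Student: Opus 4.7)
The plan is to match, in distribution, the layer-2 sinusoid argument produced by WINNER with the one that would be produced by a standard SIREN whose activation frequency has been rescaled $\omega_0 \to \alpha\,\omega_0$, where $\alpha = \sqrt{1 + d s^{2}/2}$. By Theorem~\ref{thr:theorem1}, the WINNER pre-activation $Y' = \omega_0\, X(W+\eta)$ is approximately Gaussian with zero mean and standard deviation $\alpha$. First I would compute the corresponding quantity for a frequency-rescaled SIREN with no noise: its layer-2 pre-activation is $(\alpha\omega_0)\,XW$, and Theorem~\ref{thr:theorem1} applied with $s=0$ (after folding $\alpha$ into the frequency prefactor) yields mean $0$ and standard deviation $\alpha$ as well. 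Since both distributions are asymptotically Gaussian with matching first two moments, they agree to the order accessible through the central limit theorem used in the theorem's proof.

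Next I would address the bias, which is where the approximation enters. Under WINNER the full sinusoid argument at layer $2$ is $\omega_0\, X(W+\eta) + \omega_0\, b$, while under frequency rescaling it is $\alpha\omega_0\, X W + \alpha\omega_0\, b$. The two expressions differ only by a factor $\alpha$ multiplying $\omega_0 b$. Invoking the hypothesis that the bias contribution to the pre-activation statistics is negligible, this mismatch can be dropped. The quantitative justification is that the pre-activation standard deviation scales like $\alpha \sim \sqrt{d}\,s$ for fixed $s$, whereas $\omega_0 b$ has a $d$-independent scale under the standard SIREN bias initialization; hence the ratio of bias to pre-activation standard deviation is $O(1/\sqrt{d})$, which vanishes in the large fan\_in regime the proposition explicitly restricts to.

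To close, I would note that since $\phi^{\sin}$ is applied pointwise, distributional equivalence of its argument transfers to distributional equivalence of the post-activation. Consequently the effective angular frequency seen by downstream layers under WINNER is $\alpha\omega_0$, which is exactly the statement of the proposition.

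The main obstacle I expect is not the moment calculation, which follows directly from Theorem~\ref{thr:theorem1}, but making the bias-negligibility step honest. One must verify that the omitted term is small in the relevant regime, and acknowledge that the equivalence fails when $s$ is so small that $\sqrt{d}\,s$ no longer dominates the bias scale; this is precisely why the proposition is phrased as an approximation valid for large $d$ rather than as an exact identity.
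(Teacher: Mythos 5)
Your argument is correct and matches what the paper intends: the proposition is stated there without proof, as an immediate corollary of Theorem~\ref{thr:theorem1}, and your moment-matching between $\omega_0\,X(W+\eta)$ and $\alpha\omega_0\,XW$ with $\alpha=\sqrt{1+ds^2/2}$ (both asymptotically Gaussian with standard deviation $\alpha$), followed by discarding the $(\alpha-1)\omega_0 b$ mismatch, is exactly the reasoning it relies on. The one quibble is your claim that $\omega_0 b$ has a $d$-independent scale under the standard SIREN bias initialization — the default draws $b\sim\mathcal{U}\left(-1/\sqrt{\text{fan\_in}},\,1/\sqrt{\text{fan\_in}}\right)$, so the bias in fact shrinks with $d$ — but this only strengthens the bias-negligibility step you need.
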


\subsection{Target-aware Specification of Noise Scales $s_0$ and $s_1$} \label{sec:specifyScales}
As shown in Figs.~\ref{fig:abstract}, \ref{fig:problemWSiren}, and \ref{fig:learnFreqs}, the performance of INRs is sensitive to the spectral content of the target signal. Therefore, an effective weight initialization is one that is target-aware and accounts for the spectral profile of the target. To quantify the weighted contribution of different frequencies to the target, a spectral centroid $\psi$ is defined as the normalized average frequency of the target's power spectrum, computed as
\begin{equation}
    \psi = 2 \times \frac{\sum_k k |\hat{y}(k)|}{\sum_k |\hat{y}(k)|},
\end{equation}
where \(\hat{y}(f_k)\) is the Fourier transform of the target evaluated at frequency bin \(k\). The factor $2$ normalizes \(\psi\) to the range \([0,1]\). Using \(\psi\), the noise scales \(s_0\) and \(s_1\) in Eqn.~\ref{eqn:noiseScheme} are empirically determined as  
\begin{equation}
    s_0 = s_0^{\text{max}}\!\left(1 - e^{-a\frac{\psi}{C}}\right), 
    \quad 
    s_1 = s_1^{\text{max}}\left(\frac{\psi}{C}\right),
    \label{eqn:s0s1scheme}
\end{equation}
where \(C\) denotes the number of channels (e.g., $C=3$ for RGB images). The hyperparameters \([s_0^{\text{max}}, a, s_1^{\text{max}}]\) are set as \(\left[\frac{3500}{70^{N_d-1}}, 5, \frac{4}{10^{N_d-1}}\right]\), with $N_d$ representing the input dimension of the network. Eqn.~\ref{eqn:s0s1scheme} is \textit{valid for all modalities} including audio, images, video, and 3D data.

The functional form of \(s_0\) in Eqn.~\ref{eqn:s0s1scheme} is empirically determined through multiple experiments. An initial linear formulation was tested, but experiments relating the spectral centroid to optimal \(s_0\) values revealed a trend better captured by an exponential curve rather than a linear one (see Supplementary Fig.~S1). The linear dependence was retained for \(s_1\). Both expressions are therefore ad-hoc and derived from multiple empirical evaluations; further work is required to develop a principled, target-aware formulation. Additional plots illustrating the sensitivity of noise scales to fitting accuracy for various audio clips are included in the supplementary material. As also detailed in \cite{sitzmann2020implicit}, the larger input scaling factor or the larger noise scales used for audio signals arise from their high sampling rates and higher frequency content compared to visual data.


Fig.~\ref{fig:sense} reports the effect of varying $s_0$ and $s_1$ on audio and image reconstruction. We find that SIREN$^2$ maintains strong performance over a broad range of values, showing that the method is not overly sensitive to precise tuning. The cross-marked settings from Eqn.~\ref{eqn:s0s1scheme} reliably fall in regions of high PSNR, providing a simple and robust rule for setting the perturbation scales without expensive hyperparameter search.

\begin{figure}[t!]
    \includegraphics[width=\linewidth]{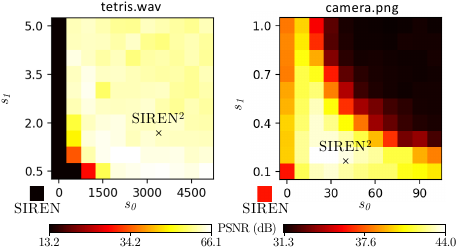}
    \caption{\textbf{Sensitivity to noise scales.} PSNR for audio (left) and image (right) reconstruction as a function of the perturbation scales $s_0$ and $s_1$. Performance is stable across a wide range, with the cross-mark indicating the target-aware scales from Eqn.~\ref{eqn:s0s1scheme} that consistently yield near-optimal results.}
    \label{fig:sense}
\end{figure}

\textit{SIREN$^2$ vs. FINER++:}
The current work compares against FINER/FINER++ \cite{liu2024finer, zhu2024finerplusplus}, which also aim to improve high-frequency representation but do so by modifying the bias initialization and the activation function. However, altering biases does not mathematically guarantee Gaussian-distributed pre-activations, and therefore cannot inherit the properties of Gaussian initialization that stabilize optimization and control signal propagation \cite{glorot2010understanding}. Another key difference is that FINER++ does not modify the weights connecting the first and second hidden layers, which limits its ability to adjust the initial spectral profile of pre-activations downstream of the first hidden-layer.

\section{Spectral Properties of SIREN$^\mathbf{2}$} \label{sec:spectral}
\subsection{Neural Tangent Kernel at Initialization} \label{sec:ntk_spectrum}
The spectral characteristics of the NTK at initialization are examined with an aim to understand the frequency support of SIREN$^2$ in comparison to SIREN. As shown in Fig.~\ref{fig:ntkspecs}, SIREN exhibits rapid eigenvalue decay and allocates higher cumulative spectral energy $\mathcal{S}(k)$ to the low-frequencies. This rapid collapse of energy occurs due to the alignment of dominant NTK eigenvectors with smooth, low-frequency functions, restricting expressivity in tasks requiring fine-scale resolution such as audio representation. The proposed WINNER used in SIREN$^2$ provides significant improvement, featuring a tunable $\mathcal{S}(k)$ profile controlled by weight perturbation scales $s_0$ and $s_1$, analogous to the Fourier scales used in random Fourier embeddings~\cite{rahimi2007random, tancik2020fourier}. By contrast, SIREN$^2$ captures high-frequencies and positional information directly through its sinusoidal activations, avoiding the quadratic increase in parameters required by random Fourier or positional embeddings to represent similar frequency content. As demonstrated in Fig.~\ref{fig:ntkspecs}, SIREN$^2$ can be configured to exhibit slower eigenvalue decay with higher cumulative spectral energy $\mathcal{S}(k)$ distributed across the entire spectrum with parameters $s_0$ and $s_1$. The critical factor for a successful implicit representation is the appropriate selection of $s_0$ and $s_1$ (Sec.~\ref{sec:specifyScales}).

\begin{figure}[t!]
    \centering
    \includegraphics[width=\linewidth]{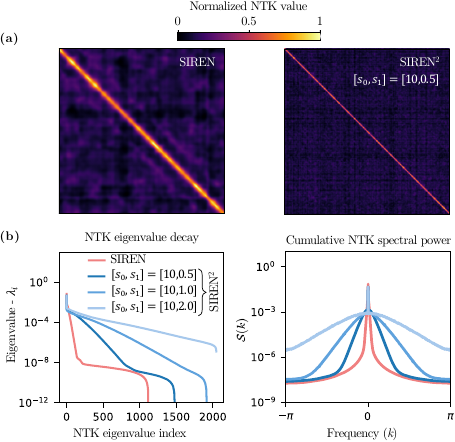}
    \caption{\textbf{Controlling the NTK spectra via noise scales $s_0$ and $s_1$ of WINNER.}  (a) Normalized NTK kernels for SIREN and SIREN$^2$ with $[s_0, s_1] = [10, 0.5]$, showing reduced off-diagonal correlations. (b) NTK eigenvalue decay profile (left) and eigenvalue-weighted FFT magnitude spectra - $\mathcal{S}(k)$ (right) for varying noise scales, demonstrating that SIREN$^2$ broadens frequency support. All networks shown here employ four hidden layers with 256 features, use $\omega_0=30$, and are evaluated on $2^{10}$ uniformly sampled inputs in $[-1, 1]$.}
    \label{fig:ntkspecs}
\end{figure}





\subsection{Activation Spectra}
Figure~\ref{fig:spectrum1} compares the distributions and power spectral densities (PSDs) of network inputs and the pre- and post-activation values in the first two hidden layers for SIREN and SIREN$^2$. For SIREN, pre-activation distributions are approximately Gaussian, while post-activations follow the $\text{Arcsin}(-1,1)$ law, consistent across layers. SIREN$^2$ preserves these distributional structure, albeit with broader pre-activation spreads in layers 1 and 2. Distributions for subsequent layers (not shown) remain identical to that of SIREN; full-layer distributions are reported in the Supplementary (Sec.~\ref{suppl:fulldists}).

Although similar distributions in the real space, differences arise in the spectral domain, especially the shape of spectra, due to the noise scales $s_0$ and $s_1$. SIREN exhibits dominant low-frequency content with negligible excitation of higher modes across layers, consistent with spectral bias. In contrast, SIREN$^2$ introduces structured broadband spectral energy right from the first hidden layer, which propagate through subsequent layers up to the output. This leads to a sustained high-frequency energy in the activations, enhancing the conditioning of the optimization landscape for regressing high-frequency signals. SIREN$^2$ retains the favorable bell type distributional properties of SIREN while enabling superior high-frequency receptivity at initialization.


\begin{figure*}[t!]
    \centering
    \includegraphics[width=0.9\linewidth]{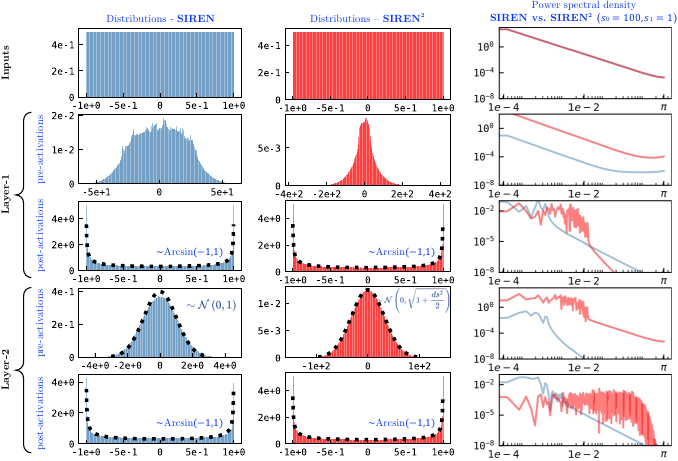}
    \caption{\textbf{WINNER enhances high-frequency receptivity.} Input, pre-activation, and post-activation distributions are shown alongside layer-averaged power spectral densities (PSDs) of SIREN and SIREN$^2$ up to layer-2 at initialization. SIREN$^2$ exhibits higher variance in its pre-activation distributions due to weight perturbations introduced by WINNER ($s_0=100, s_1=1$). The PSDs reveal that SIREN has limited high-frequency content, whereas SIREN$^2$ maintains broader spectral coverage with larger amplitudes at higher frequencies. This behavior persists through depth and extends to the outputs. A detailed analysis for a four-layer network is provided in the supplementary material. Both models use four hidden layers with 2048 hidden units per layer and are evaluated on $2^{10}$ uniformly spaced inputs over $\mathbf{x} \sim \mathcal{U}(-1,1)$ with $\omega_0=30$.}
    \label{fig:spectrum1}
\end{figure*}


\section{Experiments} \label{sec:experiments}
We evaluate the performance of SIREN$^2$ (SIREN initialized with WINNER) against several state-of-the-art INR architectures from the literature, including the baseline SIREN \cite{sitzmann2020implicit}, Gauss \cite{ramasinghe2022beyond} (2022), WIRE \cite{saragadam2023wire} (2023), and FINER \cite{liu2024finer} (2024), in reconstructing a variety of challenging audio signals. Following the recommendation in~\cite{sitzmann2020implicit}, all architectures, except SIREN$^2$, use a scaling factor of 100 for the first-layer activation periodicity for the audio fitting experiments.




\subsection{1D Audio Fitting} \label{sec:audio}
To ensure consistency across experiments, all audio signals are fixed at a length of 150{,}000 samples, making the experimental setup invariant to sampling rate.  Each network is trained for 30{,}000 epochs, and every experiment is conducted over five random trials to compute the mean peak PSNR and standard deviation. The selected audio signals span a range of spectral characteristics, including high-frequency-dominant, low-frequency-dominant, and broadband signals.

Table~\ref{tab:crit_fed} shows that the proposed SIREN$^2$ delivers consistently higher reconstruction accuracy than existing INR architectures across diverse audio signals, establishing new state-of-the-art results. These gains are especially notable for signals with strong high-frequency content, where other methods exhibit substantial residual errors. While the weight perturbation scheme amplifies high-frequency energy (Fig.~\ref{fig:dists}), it slightly suppresses low frequencies, as seen in the reconstruction of \texttt{bach.wav}. This indicates the need for a more robust choice of $s_0$ and $s_1$ for such low-frequency dominant signals. As illustrated in Fig.~\ref{fig:spectrograms}, SIREN$^2$ enables high-fidelity audio reconstructions, achieving PSNR values above $60$~dB with parameter count comparable to that of number of samples in the signal.



\begin{table*}[t!]
\centering
\caption{\textbf{Audio fitting.} Mean and standard deviation of PSNR values for different architectures on audio signal reconstruction. The proposed SIREN$^2$ achieves \textit{state-of-the-art performance}. Results are color coded as \colorbox{red!30}{best}, \colorbox{green!30}{second best}, and \colorbox{cyan!30}{third best} reconstructions. The network width is chosen so that the total parameter count is approximately equal to the signal length.}
\begin{tabular}{l @{\hspace{0.01cm}} >{\centering\arraybackslash}p{1.8cm} @{\hspace{0.12cm}} >{\centering\arraybackslash}p{1.8cm} @{\hspace{0.12cm}} >{\centering\arraybackslash}p{1.8cm} @{\hspace{0.12cm}} > {\centering\arraybackslash}p{1.8cm} @{\hspace{0.12cm}} >{\centering\arraybackslash}p{1.8cm} @{\hspace{0.12cm}} >{\centering\arraybackslash}p{1.8cm} @{\hspace{0.12cm}} >{\centering\arraybackslash}p{2.0cm}}
\toprule
& SIREN & FINER & WIRE & ReLU-PE & SIREN-RFF & FINER++ & \textbf{SIREN}$^\mathbf{2}$ \textbf{(present)} \\
\midrule
\rowcolor{gray!10}
Hidden layers & \texttt{4$\times$222} & \texttt{4$\times$222} & \texttt{4$\times$157} & \texttt{4$\times$193} & \texttt{4$\times$193} & \texttt{4$\times$222} & \texttt{4$\times$222} \\
\rowcolor{gray!10}
\# Fourier features & \texttt{0} & \texttt{0} & \texttt{0} & \texttt{193} & \texttt{193} & \texttt{0} & \texttt{0} \\
\rowcolor{gray!10}
\# parameters & \texttt{149185} & \texttt{149185} & \texttt{149474} & \texttt{150155} & \texttt{150155} & \texttt{149185} & \texttt{149185} \\
\midrule
\textbf{PSNR (dB)} ($\uparrow$):  & & & & & & & \\
\texttt{tetris.wav} & \texttt{13.4}$\pm$\texttt{0.0} & \texttt{13.6}$\pm$\texttt{0.0} & \texttt{13.6}$\pm$\texttt{0.0} & \texttt{13.6}$\pm$\texttt{0.0} & \colorbox{cyan!30}{\texttt{38.1}$\pm$\texttt{0.3}} & \colorbox{green!30}{\texttt{52.2}$\pm$\texttt{0.7}} & \colorbox{red!30}{\texttt{62.7}$\pm$\texttt{0.4}} \\
\texttt{tap.wav} & \texttt{20.4}$\pm$\texttt{0.0} & \texttt{21.1}$\pm$\texttt{0.0} & \texttt{21.1}$\pm$\texttt{0.0} & \texttt{21.1}$\pm$\texttt{0.0} & \colorbox{cyan!30}{\texttt{44.8}$\pm$\texttt{0.4}} & \colorbox{green!30}{\texttt{51.8}$\pm$\texttt{0.3}} & \colorbox{red!30}{\texttt{53.5}$\pm$\texttt{0.9}} \\
\texttt{whoosh.wav} & \texttt{33.8}$\pm$\texttt{0.9} & \colorbox{cyan!30}{\texttt{53.4}$\pm$\texttt{1.0}} & \texttt{20.2}$\pm$\texttt{0.0} & \texttt{20.2}$\pm$\texttt{0.0} & \texttt{41.8}$\pm$\texttt{0.6} & \colorbox{green!30}{\texttt{55.4}$\pm$\texttt{0.6}} & \colorbox{red!30}{\texttt{64.9}$\pm$\texttt{1.7}} \\
\texttt{radiation.wav} & \texttt{32.3}$\pm$\texttt{0.0} & \texttt{34.2}$\pm$\texttt{0.1} & \texttt{34.2}$\pm$\texttt{0.0} & \texttt{34.2}$\pm$\texttt{0.2} & \colorbox{green!30}{\texttt{52.4}$\pm$\texttt{0.1}} & \colorbox{cyan!30}{\texttt{50.9}$\pm$\texttt{1.8}} & \colorbox{red!30}{\texttt{63.0}$\pm$\texttt{1.0}} \\
\texttt{arch.wav} & \texttt{29.7}$\pm$\texttt{1.1} & \colorbox{cyan!30}{\texttt{58.5}$\pm$\texttt{0.8}} & \texttt{17.2}$\pm$\texttt{0.1} & \texttt{17.2}$\pm$\texttt{0.1} & \texttt{44.1}$\pm$\texttt{0.9} & \colorbox{green!30}{\texttt{65.2}$\pm$\texttt{0.2}} & \colorbox{red!30}{\texttt{95.2}$\pm$\texttt{2.9}} \\
\texttt{relay.wav} & \texttt{28.5}$\pm$\texttt{1.4} & \texttt{34.7}$\pm$\texttt{0.5} & \texttt{20.7}$\pm$\texttt{0.0} & \texttt{20.7}$\pm$\texttt{0.0} & \colorbox{cyan!30}{\texttt{40.5}$\pm$\texttt{0.6}} & \colorbox{green!30}{\texttt{54.1}$\pm$\texttt{0.4}} & \colorbox{red!30}{\texttt{60.4}$\pm$\texttt{2.9}} \\
\texttt{voltage.wav} & \texttt{34.0}$\pm$\texttt{0.8} & \colorbox{cyan!30}{\texttt{53.4}$\pm$\texttt{0.6}} & \texttt{20.0}$\pm$\texttt{0.0} & \texttt{19.9}$\pm$\texttt{0.0} & \texttt{43.7}$\pm$\texttt{0.3} & \colorbox{green!30}{\texttt{56.5}$\pm$\texttt{0.1}} & \colorbox{red!30}{\texttt{64.5}$\pm$\texttt{0.5}} \\
\texttt{foley.wav} & \texttt{36.6}$\pm$\texttt{7.2} & \colorbox{green!30}{\texttt{56.8}$\pm$\texttt{0.1}} & \texttt{29.7}$\pm$\texttt{0.1} & \texttt{22.5}$\pm$\texttt{0.0} & \texttt{44.9}$\pm$\texttt{0.3} & \colorbox{cyan!30}{\texttt{56.4}$\pm$\texttt{0.2}} & \colorbox{red!30}{\texttt{58.3}$\pm$\texttt{0.2}} \\
\texttt{shattered.wav} & \texttt{39.1}$\pm$\texttt{1.9} & \colorbox{green!30}{\texttt{58.6}$\pm$\texttt{0.4}} & \texttt{25.5}$\pm$\texttt{0.0} & \texttt{25.4}$\pm$\texttt{0.0} & \texttt{46.4}$\pm$\texttt{0.6} & \colorbox{cyan!30}{\texttt{57.9}$\pm$\texttt{0.3}} & \colorbox{red!30}{\texttt{64.7}$\pm$\texttt{0.7}} \\
\texttt{bach.wav} & \texttt{59.4}$\pm$\texttt{0.3} & \colorbox{red!30}{\texttt{64.5}$\pm$\texttt{0.2}} & \texttt{26.1}$\pm$\texttt{0.5} & \texttt{18.9}$\pm$\texttt{0.0} & \texttt{41.8}$\pm$\texttt{0.2} & \colorbox{green!30}{\texttt{62.2}$\pm$\texttt{0.3}} & \colorbox{cyan!30}{\texttt{60.5}$\pm$\texttt{0.2}} \\
\texttt{birds.wav} & \texttt{55.7}$\pm$\texttt{0.2} & \colorbox{green!30}{\texttt{59.6}$\pm$\texttt{0.1}} & \texttt{24.6}$\pm$\texttt{0.0} & \texttt{24.4}$\pm$\texttt{0.0} & \texttt{45.7}$\pm$\texttt{0.5} & \colorbox{cyan!30}{\texttt{58.7}$\pm$\texttt{0.1}} & \colorbox{red!30}{\texttt{61.2}$\pm$\texttt{0.2}} \\

\midrule
\texttt{Average} & \texttt{34.8}$\pm$\texttt{1.3} & \texttt{46.2}$\pm$\texttt{0.3} & \texttt{23.0}$\pm$\texttt{0.1} & \texttt{21.7}$\pm$\texttt{0.0} & \texttt{44.0}$\pm$\texttt{0.4} & \texttt{56.5}$\pm$\texttt{0.5} & \texttt{64.5}$\pm$\texttt{1.1} \\
\bottomrule
\end{tabular}
\label{tab:crit_fed}
\end{table*}

\begin{figure}[t!]
    \centering
    \includegraphics[width=\linewidth]{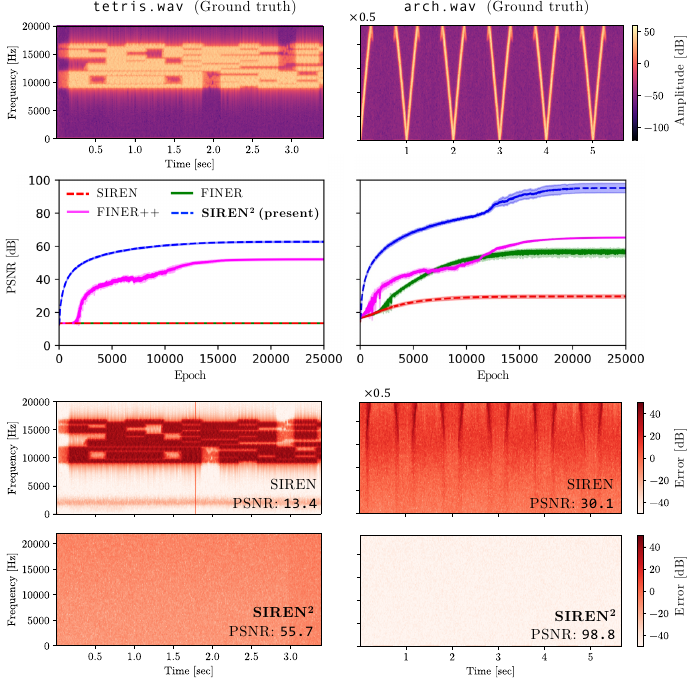}
    \caption{\textbf{High-fidelity audio reconstruction with SIREN$^2$.} Each column corresponds to an audio signal: \texttt{tetris.wav}, \texttt{sparking.wav}, and \texttt{shattered.wav}. Top row: ground truth spectrograms. Second row: Averaged PSNR histories from different network architectures with the region of standard deviation shaded. Remaining rows show the \textit{spectrogram errors} for reconstructions by each model. SIREN$^2$ stands out by exhibiting near-zero reconstruction noise.}
    \label{fig:spectrograms}
\end{figure}

\noindent \textbf{Reproducibility details.} The  inputs and audio targets are normalized to $[-1, 1]$ before training. SIREN and FINER use $\omega_0 = 30$, WIRE uses $\omega_0 = 10$ and $s_0 = 10$~\cite{saragadam2023wire}, and SIREN-RFF uses $\omega_0 = 30$ with Fourier embeddings drawn from $\mathcal{N}(0, 30^2)$. FINER++~\cite{zhu2024finerplusplus} employs a first-layer bias uniformly distributed in $[-5, 5]$. For all networks, the first-layer $\omega_0$ is scaled by $100$. Training uses a learning-rate scheduler that decays by 1\% every 20 epochs from an initial value of $10^{-4}$. Audio samples and code are provided in the linked GitHub repository.





\subsection{2D Image Fitting} \label{sec:images}
We evaluate SIREN$^2$ on 2D image fitting tasks, $f(\mathbf{x};\theta): \mathbb{R}^2 \mapsto \mathbb{R}^d$, with $d=1$ for grayscale and $d=3$ for RGB images. The experiments cover a diverse set of images, including natural images from the Kodak~\cite{kodak_photocd} dataset, challenging texture images from two DTD~\cite{cimpoi14describing} classes (\emph{braided} and \emph{woven}), and synthetic high-frequency patterns. Table~\ref{tab:images1} shows that SIREN$^2$ consistently outperforms the original SIREN across all cases, with PSNR improvements ranging from 7\% (\texttt{2D-Riemann}) to 69\% (\texttt{noise.png}), and especially strong gains for images with high-frequency content and small pixel count (in the over-paramaterized regime). For RGB image reconstruction, SIREN$^2$ provides only marginal gains over SIREN, while FINER achieves the best performance. Figure~\ref{fig:image_exps} visualizes these improvements using FFT error maps, highlighting that SIREN$^2$ achieves lower fitting error in high-frequency regions. The frequency-domain analysis further confirms that SIREN$^2$ preserves fine details and sharp transitions more accurately, as indicated by the reduced magnitude errors (darker regions).

\begin{table*}[h!]
\centering
\caption{\textbf{2D image fitting.} PSNR($\uparrow$) in dB across images and datasets for different INR architectures. SIREN$^2$ consistently surpasses SIREN, with percentage gains (in parentheses) reflecting improvements achieved purely through initialization. Results are color coded as \colorbox{red!30}{best}, \colorbox{green!30}{second best}, and \colorbox{cyan!30}{third best} reconstructions.}
\begin{tabular}{l @{\hspace{0.01cm}} >{\centering\arraybackslash}p{1.4cm} @{\hspace{0.12cm}} >{\centering\arraybackslash}p{2.2cm} @{\hspace{0.12cm}} >{\centering\arraybackslash}p{1.4cm} @{\hspace{0.12cm}} > {\centering\arraybackslash}p{1.4cm} @{\hspace{0.12cm}} > {\centering\arraybackslash}p{1.4cm} @{\hspace{0.12cm}} >{\centering\arraybackslash}p{1.4cm}}
\toprule
& SIREN & \textbf{SIREN$^2$ (present)} & ReLU-PE & WIRE & FINER & Gauss \\
\midrule
\rowcolor{gray!10}
Hidden layers ($n\times w$) & \texttt{4$\times$256} & \texttt{4$\times$256} & \texttt{4$\times$256} & \texttt{4$\times$128} & \texttt{4$\times$256} & \texttt{4$\times$256} \\
\rowcolor{gray!10}
\# parameters & \texttt{198145} & \texttt{198145} & \texttt{263553} & \texttt{198386} & \texttt{198145} & \texttt{198145}  \\
\midrule
\textbf{Peak PSNR (dB)} ($\uparrow$): & & & & & &  \\
noise.png & 21.3 & \colorbox{red!30}{\textbf{36.1}} (\textcolor{teal}{69\% $\uparrow$}) & 16.9 & 25.5 & \colorbox{cyan!30}{33.0} & \colorbox{green!30}{34.1} \\
camera.png & \colorbox{cyan!30}{38.9} & \colorbox{green!30}{\textbf{44.9}} (\textcolor{teal}{15\% $\uparrow$}) & 28.4 & 37.2 & \colorbox{red!30}{46.4} & 28.6 \\
castle.jpg & \colorbox{cyan!30}{33.6} & \colorbox{green!30}{\textbf{36.5}} (\textcolor{teal}{9\% $\uparrow$}) & 22.3 & 28.5 & \colorbox{red!30}{36.9} & 19.2 \\
rock.png & 26.9 & \colorbox{green!30}{\textbf{36.2}} (\textcolor{teal}{35\% $\uparrow$}) & 16.1 & 26.5 & \colorbox{red!30}{36.6} & \colorbox{cyan!30}{31.8} \\
2D-Riemann (CFD data) & \colorbox{cyan!30}{55.3} & \colorbox{green!30}{\textbf{59.1}} (\textcolor{teal}{7\% $\uparrow$}) & 45.8 & 49.7 & \colorbox{red!30}{60.5} & 28.1 \\
DTD braided dataset (120 images, gray mode)$^*$ & \colorbox{cyan!30}{48.6} & \colorbox{red!30}{\textbf{75.2}} (\textcolor{teal}{55\% $\uparrow$}) & - & - & \colorbox{green!30}{65.4} & - \\
DTD woven dataset (120 images, gray mode)$^*$ & \colorbox{cyan!30}{41.9} & \colorbox{red!30}{\textbf{61.0}} (\textcolor{teal}{46\% $\uparrow$}) & - & - & \colorbox{green!30}{53.1} & - \\
    Kodak dataset (24 images, gray mode)$^*$ & \colorbox{cyan!30}{34.9} & \colorbox{green!30}{\textbf{37.6}} (\textcolor{teal}{8\% $\uparrow$}) & - & - & \colorbox{red!30}{38.1} & - \\
\bottomrule
\multicolumn{7}{r}{\footnotesize $^*$The reported PSNR values for these datasets represent averages computed over all images.}
\end{tabular}
\label{tab:images1}
\end{table*}

\begin{figure}
    \centering
    \includegraphics[width=\linewidth]{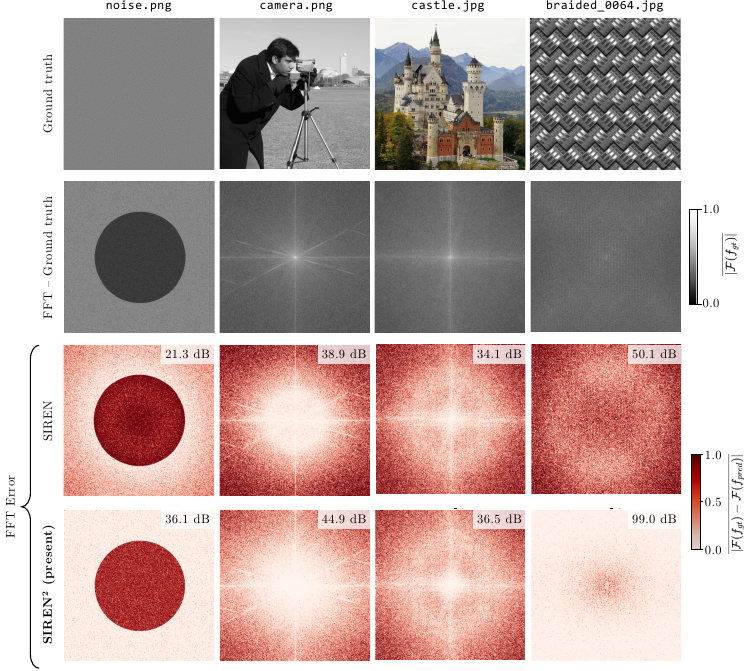}
    \caption{\textbf{Accuracy improvements in Fourier space with SIREN$^2$ for image fitting.} Reconstruction performance of SIREN and \textsc{SIREN\textsuperscript{2}} models for various images. The top two rows show the ground truth images and their corresponding FFT magnitudes ($|\mathcal{F}(f_{gt})|$). Subsequent rows depict the FFT error maps and peak PSNR values achieved by different models. \textsc{SIREN\textsuperscript{2}} consistently produce lower FFT errors and higher PSNRs.}
    \label{fig:image_exps}
\end{figure}

\textbf{Reproducibility details.} All networks follow the same settings as the audio fitting experiments, with no scaling applied to the first layer $\omega_0$. The ReLU+PE presented in Table~\ref{tab:images1} incorporates positional encoding with 256 embeddings. The positional encodings use a logarithmic frequency spectrum, with frequencies ranging from $2^0$ to $2^{\text{n}-1}$, with n=7 (number of frequencies).

\subsection{Image Denoising} \label{sec:imageDeNoise}
\begin{figure}[t!]
    \vspace{-10pt}
    \includegraphics[width=\linewidth]{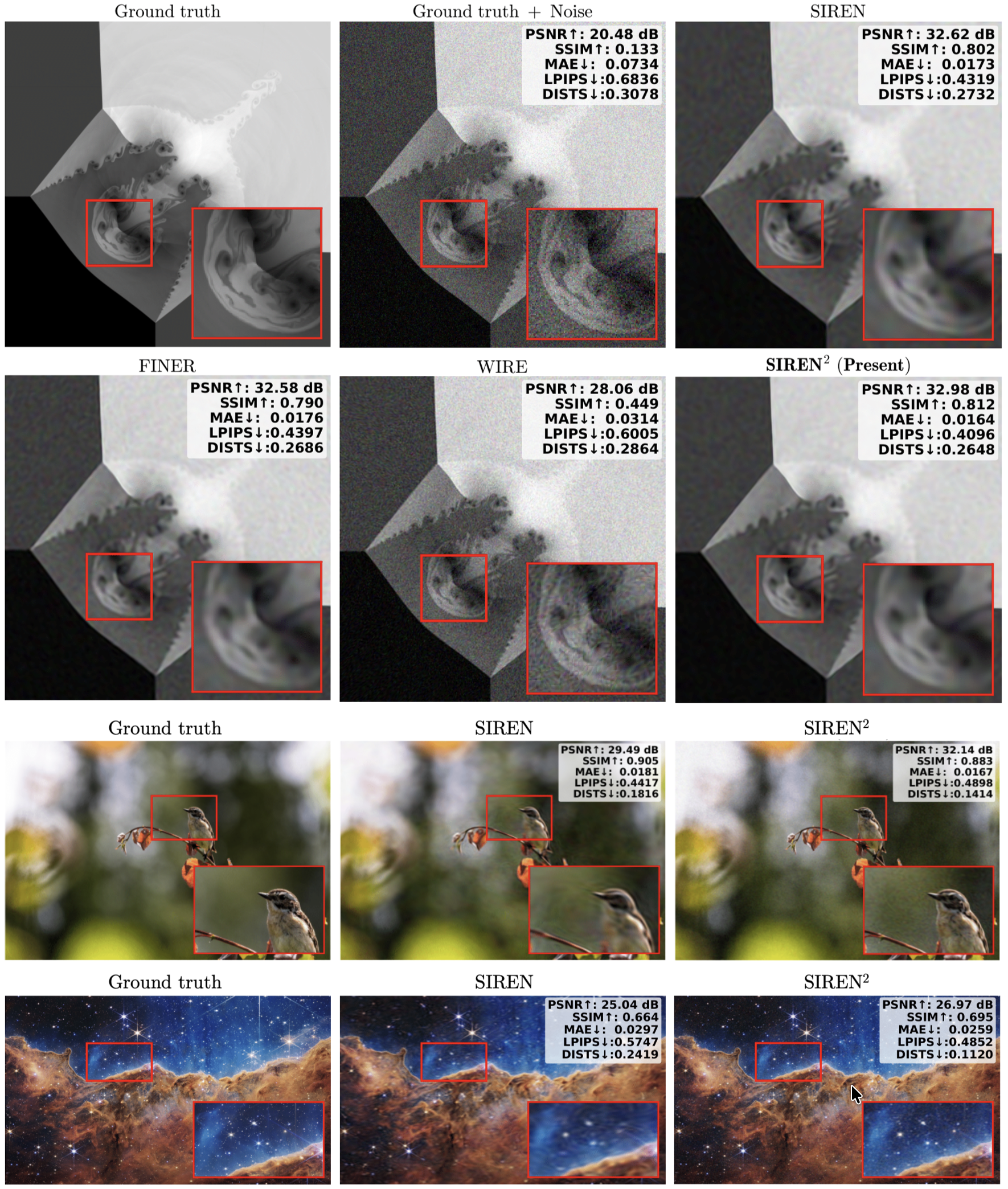}
    \caption{\textbf{Image denoising.} Comparison of reconstructions for noisy 2D fields and natural images. Rows 1–2: reconstruction of a noisy density field obtained from a computational fluid dynamics simulation of a 2D Riemann problem \cite{kurganov2002solution,chandravamsi2024high}. Rows 3–4: reconstruction of a sparrow image and a nebula image. SIREN$^2$ yields the highest fidelity among the compared methods, with quantitative results summarized in Table~\ref{tab:image-denoise}.}
    \label{fig:image-denoise}
\end{figure}

The robustness of different INR architectures is assessed for the canonical image denoising. 
Firstly, a clean signal $f(\mathbf{x})$ is corrupted by additive white Gaussian noise $\eta(\mathbf{x}) \sim \mathcal{N}(0,\sigma^2)$ such that 
$\tilde{f}(\mathbf{x}) = f(\mathbf{x}) + \eta(\mathbf{x})$ achieves a signal-to-noise ratio of $\mathrm{SNR}=5\ \mathrm{dB}$. 
The task is to reconstruct $f$ from $\tilde{f}$. 
We adopt an unsupervised denoising strategy similar to Noise2Self~\cite{batson2019noise2self}, training the INR directly on the noisy input while reserving a small subset of pixels for `J-invariant' validation. A predictor is J-invariant if, for any pixel $i$, the prediction at $i$ does not depend on the noisy value at $i$ itself. This restriction forces the model to reconstruct structure from spatial context rather than memorize pixelwise noise. The held-out set then provides an unsupervised early-stopping criterion that prevents overfitting to noise while retaining underlying structure. \textit{Ground-truth images are used strictly for evaluation (e.g., PSNR, SSIM) and never for model selection.} For the image quality evaluation, along with PSNR we report structural similarity~\cite{wang2004image},  mean absolute error (MAE), LPIPS~\cite{zhang2018unreasonable}, and DISTS~\cite{ding2020image}. For clarity, arrows ($\uparrow$ / $\downarrow$) are used in tables and figures to indicate whether higher or lower values correspond to better performance.

\begin{table}[t!]
\centering
\caption{Quantitative comparison of different architectures on noisy image reconstruction. SIREN$^2$ (ours) consistently achieves the best or comparable scores across all metrics, showing better noise suppression and structural fidelity.}
\renewcommand{\arraystretch}{0.95}
\setlength{\tabcolsep}{0.1pt}
\begin{tabular}{
  l 
  @{\hspace{0.01cm}} >{\centering\arraybackslash}p{1.2cm}
  @{\hspace{0.01cm}} >{\centering\arraybackslash}p{1.2cm}
  @{\hspace{0.01cm}} >{\centering\arraybackslash}p{1.2cm}
  @{\hspace{0.01cm}} >{\centering\arraybackslash}p{1.2cm}
  @{\hspace{0.01cm}} >{\centering\arraybackslash}p{1.25cm}
}
\toprule
 & PSNR($\uparrow$) & SSIM($\uparrow$) & MAE($\downarrow$) & LPIPS($\downarrow$) & DISTS($\downarrow$) \\
\midrule
\rowcolor{gray!10} 
\textbf{2D Riemann (Fig.~\ref{fig:image-denoise})} & & & & & \\
Ground truth + Noise & 20.48 & 0.133 & 0.0734 & 0.6836 & 0.3078 \\
SIREN & \colorbox{green!30}{32.62} & \colorbox{green!30}{0.802} & \colorbox{green!30}{0.0173} & \colorbox{green!30}{0.4139} & \colorbox{cyan!30}{0.2732} \\
FINER & \colorbox{cyan!30}{32.58} & \colorbox{cyan!30}{0.790} & \colorbox{cyan!30}{0.0176} & \colorbox{cyan!30}{0.4397} & \colorbox{green!30}{0.2686} \\
WIRE & 28.06 & 0.449 & 0.0314 & 0.6005 & 0.2864 \\
SIREN$^2$ (Ours) & \colorbox{red!30}{32.98} & \colorbox{red!30}{0.812} & \colorbox{red!30}{0.0164} & \colorbox{red!30}{0.4096} & \colorbox{red!30}{0.2648} \\
\midrule
\rowcolor{gray!10}
\textbf{Sparrow} & & & & & \\
Ground truth + Noise & 20.44 & 0.144 & 0.0753 & 0.7077 & 0.2722 \\
SIREN & 29.49 & \colorbox{red!30}{0.905} & \colorbox{cyan!30}{0.0181} & \colorbox{red!30}{0.4417} & 0.1816 \\
FINER & \colorbox{green!30}{31.90} & \colorbox{cyan!30}{0.880} & \colorbox{green!30}{0.0172} & \colorbox{cyan!30}{0.4929} & \colorbox{green!30}{0.1423} \\
WIRE & \colorbox{cyan!30}{30.23} & 0.816 & 0.0205 & 0.5336 & \colorbox{cyan!30}{0.1792} \\
SIREN$^2$ (Ours) & \colorbox{red!30}{32.14} & \colorbox{green!30}{0.883} & \colorbox{red!30}{0.0167} & \colorbox{green!30}{0.4899} & \colorbox{red!30}{0.1414} \\
\midrule
\rowcolor{gray!10}
\textbf{Nebula} & & & & & \\
Ground truth + Noise & 20.31 & 0.261 & 0.0775 & 0.5554 & 0.2196 \\
SIREN & \colorbox{cyan!30}{25.04} & \colorbox{cyan!30}{0.664} & \colorbox{cyan!30}{0.0297} & 0.5747 & 0.2419 \\
FINER & \colorbox{green!30}{26.37} & \colorbox{red!30}{0.697} & \colorbox{green!30}{0.0263} & \colorbox{green!30}{0.5055} & \colorbox{green!30}{0.1503} \\
WIRE & 25.02 & 0.582 & 0.0350 & \colorbox{cyan!30}{0.5410} & \colorbox{cyan!30}{0.1997} \\
SIREN$^2$ (Ours) & \colorbox{red!30}{26.97} & \colorbox{green!30}{0.695} & \colorbox{red!30}{0.0259} & \colorbox{red!30}{0.4827} & \colorbox{red!30}{0.1120} \\
\bottomrule
\end{tabular}
\label{tab:image-denoise}
\end{table}

Table~\ref{tab:image-denoise} summarizes the quantitative evaluation of image denoising performance across different models using standard metrics, including PSNR, SSIM, MAE, LPIPS, and DISTS. The results demonstrate that SIREN$^2$ consistently achieves higher fidelity reconstructions with improved perceptual and structural quality compared to the baseline SIREN and other methods. Snapshots of reconstructions presented in Fig.~\ref{fig:image-denoise} with regard to the 2D-Riemann problem suggest that SIREN and FINER oversmooth fine-scale details, while WIRE preserves global appearance but leaves residual noise. SIREN$^2$ achieves the sharpest and most faithful reconstructions (with the best SSIM and LPIPS), suppressing noise while retaining textures and edges.

\textbf{Reproducibility details.} The ground truth signal is corrupted with Gaussian noise at an SNR of 5 dB. All experiments employ a four-layer architecture with 256 features per layer. The search space for SIREN$^2$ noise scales is confined to $s_0 \in [0,200]$ with a fixed $s_1=0.01$. For FINER++, the bias scale $k$ in $\bar{b}\sim \mathcal{U}(-k,k)$ of the first hidden layer is set to $k \in [0,20]$. The first-layer activation periodicity $\omega_0$ is kept at 30 in both FINER++ and SIREN$^2$, ensuring comparable spectral bias.

\subsection{Audio Denoising} \label{sec:audioDeNoise}

\begin{table}
\centering
\caption{\textbf{Audio denoising.} Best PSNR ($\uparrow$) in dB for different audio clips using FINER++ and SIREN$^2$.}
\begin{tabular}{l 
  @{\hspace{0.01cm}} >{\centering\arraybackslash}p{1.6cm}
  @{\hspace{0.12cm}} >{\centering\arraybackslash}p{1.4cm} 
  @{\hspace{0.12cm}} >{\centering\arraybackslash}p{2.5cm} }
\toprule
 & GT+Noise & FINER++ & \textbf{SIREN$^2$ (present)} \\
\midrule
\texttt{bach.wav}      & 21.16 & 34.69 & \textbf{35.53} \\
\texttt{dilse.wav}     & 21.46 & 35.16 & \textbf{35.18} \\
\texttt{birds.wav}     & 29.92 & 34.84 & \textbf{37.17} \\
\texttt{counting.wav}  & 26.67 & 38.19 & \textbf{38.71} \\
\bottomrule
\end{tabular}
\label{tab:audio-denoise}
\end{table}

We adopt the same Noise2Self-inspired DIP training procedure from Sec.~\ref{sec:imageDeNoise} for the present audio denoising experiments. Ground-truth audio signal is used only for evaluation. While the general denoising framework is identical, audio signals present distinct challenges. Unlike natural images, which concentrate most energy in low frequencies, audio signals often exhibit a relatively broadband structure (e.g., higher harmonics in music, ambient noise), causing stronger overlap between the underlying signal and Gaussian noise. This overlap makes simple frequency-selective filtering less effective. In an a-priori setting, where no ground-truth information is available, INR-based denoising can provide a useful alternative.

in Sec.~\ref{sec:audio} on supervised audio fitting task, $\omega_0$ was scaled by a factor of 100 consistently for all network architectures. However, for the present audio-denoising task, a large first-layer periodicity consistently degraded denoising for SIREN, Gauss, WIRE, and FINER, with best denoising $\mathrm{PSNR} < 25\,\mathrm{dB}$ for different signals; due to the presence of broadband noise in the noisy signal. We therefore avoid first-layer $\omega_0$ scaling for denoising and report results for FINER++ and SIREN$^2$ architectures only which does not use any first layer $\omega_0$ scaling. Table~\ref{tab:audio-denoise} and Fig.~\ref{fig:audio-denoise} report the accuracy of denoised reconstructions for \texttt{bach.wav} audio clip after 20,000 epochs for FINER++ and SIREN$^2$. The noise scales $s_0$ and $s_1$ in SIREN$^2$ provide controllable filtering, allowing a principled adjustment of frequency support to match the noise characteristics. For FINER++, varying the bias ranges was evaluated. The results in Table~\ref{tab:audio-denoise} show that both networks achieve competitive accuracy in the recovery of the underlying signal.

\textbf{Reproducibility details.} The ground truth signal is corrupted with Gaussian noise at an SNR of 5 dB. All experiments employ a four-layer architecture with 256 features per layer. The search space for SIREN$^2$ noise scales is confined to $s_0 \in [800,2000]$ with a fixed $s_1=0.001$. For FINER++, the bias scale $k$ in $\bar{b}\sim \mathcal{U}(-k,k)$ of the first hidden layer is set to $k \in [0,20]$. The first-layer activation periodicity $\omega_0$ is kept at 30 in both FINER++ and SIREN$^2$. No scaling factor was used to increase the $\omega_0$ value for the first layer activations. 

\begin{figure*}
    \centering
    \includegraphics[width=\linewidth]{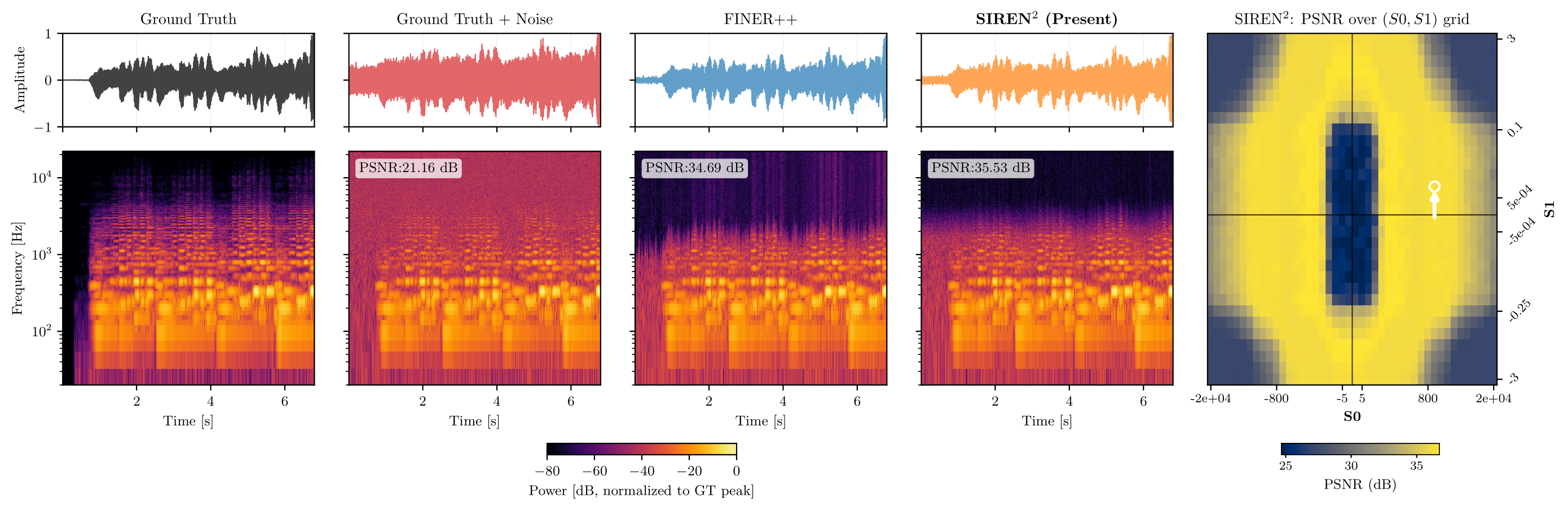}
    \caption{\textbf{Audio denoising.} Log-spectrogram comparison for audio denoising task performed on \texttt{bach.wav} using FINER++ and SIREN$^2$. SIREN$^2$ achieves higher PSNR and preserves more harmonic detail, as visible in the spectrogram. The rightmost plot shows the PSNR variation of SIREN$^2$ over the $(s_0, s_1)$ grid, illustrating that the noise scales $s_0$ and $s_1$ provide flexibility to control the degree of denoising through modulation of high-frequency suppression.}
    \label{fig:audio-denoise}
\end{figure*}


\subsection{Audio Inpainting} \label{sec:inpaint}
We evaluate the inpainting capability of SIREN$^{2}$ using the same network width, signal length, and noise scales employed in the audio fitting experiments of Sec.~\ref{sec:audio}. A fixed percentage of sample values is removed and the model is trained only on the observed subset. Sparsity denotes the retained fraction of samples in the training set. No spatial or structural priors are used; the task reduces to regressing the underlying continuous audio data from incomplete pointwise samples. 

Across all sparsity levels, SIREN$^{2}$ converges to lower reconstruction error for most cases than the baseline SIREN. The gains are consistent for both low and moderate sparsity, indicating that the improved high-frequency receptivity at initialization can benefit recovery of missing regions. The reconstructed samples exhibit sharper transitions indicating influence of initialization scheme on the INR’s ability to approximate the fine-scale details under limited observations.

\begin{table}[!ht]
    \centering
    \caption{\textbf{Audio inpainting.} Mean and standard deviation of PSNR (in dB) for different sparsity levels using SIREN and SIREN$^{2}$.}
    \begin{tabular}{lccc}
        \hline
        \textbf{Audio sample} & \textbf{Sparsity} & \textbf{SIREN} & \textbf{SIREN$^{2}$}\\
        \hline
        \multirow{4}{*}{bach.wav}
         & 0.05 & 18.87 $\pm$ 0.28 & 19.19 $\pm$ 0.03 \\
         & 0.1  & 22.26 $\pm$ 0.14 & 22.71 $\pm$ 0.13 \\
         & 0.25 & 32.79 $\pm$ 0.63 & 32.97 $\pm$ 0.37 \\
         & 0.5  & 46.89 $\pm$ 0.33 & 46.81 $\pm$ 0.44 \\
        \hline
        \multirow{4}{*}{relay.wav}
         & 0.05 & 16.53 $\pm$ 1.51 & 18.34 $\pm$ 0.08\\
         & 0.1  & 15.92 $\pm$ 0.28 & 18.10 $\pm$ 0.17\\
         & 0.25 & 16.79 $\pm$ 0.07 & 18.18 $\pm$ 0.18\\
         & 0.5  & 17.77 $\pm$ 0.03 & 19.18 $\pm$ 0.16\\
        \hline
        \multirow{4}{*}{tetris.wav}
         & 0.05 & 8.80  $\pm$ 0.13 & 12.19 $\pm$ 0.02\\
         & 0.1  & 9.21  $\pm$ 0.07 & 11.95 $\pm$ 0.03\\
         & 0.25 & 10.13 $\pm$ 0.21 & 11.87 $\pm$ 0.08\\
         & 0.5  & 12.19 $\pm$ 0.27 & 12.50 $\pm$ 0.08\\
        \hline
    \end{tabular}
\end{table}

\section{Conclusion} \label{sec:conclusions}

Deep neural networks exhibit a spectral learning bias, where low-frequency components are learned early in training while higher-frequency modes emerge more gradually in the later epochs. However, when the target signal lacks low-frequency content and is dominated by high frequencies, training can suffer from a spectral bottleneck, leading to failure in reconstructing the signal. The present work examined such a phenomenon in the context of sinusoidal representation networks (SIRENs), with a focus on fitting signals dominated by high-frequency content. We analyze the evolution of activation spectra and the neural tangent kernel and characterize the mechanisms underlying the spectral bottleneck. To address this limitation, we propose a target-aware Gaussian weight perturbation scheme, WINNER. The method applies Gaussian noise of a specified scale to the uniformly initialized weights, thereby modifying pre-activation distributions and layerwise spectra. Similar to random Fourier embeddings \cite{tancik2020fourier}, the present weight perturbation scheme can be used to control the empirical NTK and its eigenbasis, with the added benefit of not introducing additional trainable parameters. The approach achieves state-of-the-art performance on audio fitting tasks and highlights the critical role of initial activation spectra in determining fitting accuracy.

\noindent \textbf{Limitations.} (a) The proposed initialization scheme relies on prior knowledge or an approximate estimate of the target’s spectral content to determine the perturbation scales $s_0$ and $s_1$. For problems where the target is not known in advance, such as denoising or solving initial/boundary value PDEs, these hyperparameters can be initialized with estimated values and adaptively updated during early training. A similar limitation also arises when specifying the Fourier scale for other equivalent methods such as random Fourier feature embeddings \cite{tancik2020fourier}. (b) Similar to the other architectures examined in this study, the training dynamics and final performance of SIREN$^2$ are sensitive to the learning rate and its decay schedule. Without a scheduler, training often results in a strongly oscillatory PSNR evolution resulting in poor convergence. We recommend a decay rate of 1-2\% every 20 epochs with an initial learning rate of $10^{-4}$ for both audio and image fitting to achieve stable, non-oscillatory PSNR evolution.


\section*{Acknowledgments}
The authors HC, DVS, IZ, ZC, and SP gratefully acknowledge the financial assistance provided by Technion - Israel Institute of Technology during the course of present work. HC thanks Sanketh Vedula (Princeton University) for directing him to the work of Sitzmann et al. \cite{sitzmann2020implicit}; experimenting with it helped the development of noise scheme presented in this work. Generative AI tools were used solely for language editing, limited to grammar and phrasing refinement, and were not used for developing or modifying technical content.

\bibliographystyle{unsrt}
\bibliography{ML_bibilography}
\pagebreak
\newpage
\newpage
\newpage

\section*{Biographies}
\begin{IEEEbiographynophoto}{Hemanth Chandravamsi} received the Ph.D. degree in mechanical engineering from the Technion -- Israel Institute of Technology, Haifa, Israel. He is currently a Postdoctoral Researcher with the Faculty of Mechanical Engineering at the Technion. His research interests include scientific and physics aware machine learning, high speed flows, and numerical algorithms.
\end{IEEEbiographynophoto}
\begin{IEEEbiographynophoto}{Dhanush V. Shenoy} received the Ph.D. degree in Aerospace Engineering from ISAE-SUPAERO, Toulouse, France, where his doctoral research focused on aeroacoustics and low–Reynolds-number flows. He is currently pursuing post-doctoral research in deep reinforcement learning and its applications to computational fluid dynamics. His research interests include machine learning for flow modeling and control, quantum computing and quantum-inspired algorithms, lattice Boltzmann methods, high-performance computing, and GPU-accelerated numerical solvers.
\end{IEEEbiographynophoto}
\begin{IEEEbiographynophoto}{Itay Zinn} received the M.Sc. degree in mechanical engineering from the Ariel University, Israel, and he is currently pursuing his Ph.D. degree in computational fluid dynamics at Technion -- Israel Institute of Technology, Israel. His research interests include machine learning, high speed flows, and computational fluid dynamics.
\end{IEEEbiographynophoto}
\begin{IEEEbiographynophoto}{Ziv Chen} received the B.Sc. degree from the Technion -- Israel Institute of Technology, Haifa, Israel, where he is currently pursuing a direct Ph.D. degree with the Faculty of Electrical and Computer Engineering. He is also affiliated with the Helen Diller Quantum Center, Technion -- Israel Institute of Technology, from which he received an excellence scholarship. His research interests include scientific and physics aware machine learning and quantum computing algorithms for physics simulations.
\end{IEEEbiographynophoto}
\begin{IEEEbiographynophoto}{Shimon Pisnoy} received the Ph.D. degree in mechanical engineering from the Technion -- Israel Institute of Technology, Haifa, Israel, where he is currently a Ph.D. student with the Faculty of Mechanical Engineering. His research interests include machine learning, high speed flows, and computational fluid dynamics.
\end{IEEEbiographynophoto}
\begin{IEEEbiographynophoto}{Steven H. Frankel} 
is a Rosenblatt Chair Professor with the Faculty of Mechanical Engineering, Technion -- Israel Institute of Technology, Haifa, Israel. He is also affiliated with the Helen Diller Quantum Center, Technion -- Israel Institute of Technology. His research interests include scientific machine learning, computational fluid dynamics, combustion, and quantum computing for physics simulations.
\end{IEEEbiographynophoto}

\vfill

\end{document}